\numberwithin{equation}{section}
\newtheorem{theorem}{Theorem}[section]
\newtheorem{proposition}{Proposition}[section]
\newtheorem{lemma}{Lemma}[section]
\newtheorem{corollary}{Corollary}[section]
\theoremstyle{definition}
\newtheorem{definition}{Definition}[section]
\newtheorem{remark}{Remark}[section]
\newtheorem{conjecture}{Conjecture}[section]
\newcommand{\Q}{\mathbb{Q}}
\newcommand{\Z}{\mathbb{Z}}
\newcommand{\C}{\mathbb{C}}
\newcommand{\even}{{\bar{0}}}
\newcommand{\odd}{{\bar{1}}}
\newcommand{\Oth}{\mathbf{O}}
\newcommand{\g}{\mathfrak{g}}
\begin{document}

\title{Automorphisms and twisted forms of the $N=1,2,3$ Lie conformal superalgebras}

\author{\textsc{Zhihua Chang and Arturo Pianzola}}

\maketitle

\begin{abstract}
We classify the  $N=1,2,3$ superconformal Lie algebras of
Schwimmer and  Seiberg by means of differential non-abelian
cohomology, and describe the general philosophy behind this
new technique. The structure of the group (functor) of
automorphisms of the corresponding Lie conformal
superalgebra is a key ingredient of the proof.
\end{abstract}

\section{Introduction}

The purpose of this paper is two-fold.

(1) To bring to the attention of the physics community new
ideas, which for lack of a better word we will refer to as
differential non-abelian cohomology, that have recently
been used  to establish some deep results in infinite
dimensional Lie theory  (see, for example,
\cite{CGP,GP1,GP2,GP3,GP4,P1,P2}).  These powerful methods,
based on the seminal work of Demazure and Grothendieck on
reductive groups schemes, torsors and descent \cite{D,SGA1,SGA3}, can be adapted to the study of Lie conformal
superalgebras as explained in \cite{KLP}  by what amounts
to formally replacing the given base scheme,
$\rm{Spec}(\C[t^{\pm 1}])$ in our case, by  a differential
scheme.

(2) To complete the classification of the $N = 1,2,3,4$
superconformal Lie algebras by providing a {\it uniform}
proof of the cases $N = 1,2,3$, exploiting the fact  that
the relevant Lie conformal superalgebras used as base
objects for the twisted loop construction can be described
in terms of exterior algebras. We also provide a precise
argument that describes the passage from Lie conformal
superalgebras to their corresponding Lie superalgebras.
(The cases $N= 2,4$ were done  in \cite{KLP} at the Lie
conformal superalgebra level only, and by \textit{ad hoc}
methods). These Lie superalgebras are more relevant to
physics, where they are commonly referred to as
superconformal Lie algebras.

$N=1,2,3,4$ superconformal Lie algebras are a class of
infinite dimensional Lie superalgebras which plays
important roles in both mathematics and physics. They are
closed related to the twisted loop Lie conformal
superalgebra based on a complex Lie conformal superalgebra.
They were first introduced in \cite{K} as the affinization
of Lie conformal superalgebras over the complex numbers and
then realized as differential Lie conformal algebra in
\cite{KLP}. Based on the point of view taken in \cite{KLP},
a twisted loop Lie conformal superalgebra has both a
complex conformal superalgebra structure and an
$\mathcal{R}$-Lie conformal superalgebra structure where
$\mathcal{R}=(\C[t^{\pm1}],\frac{d}{dt}).$ While the
complex structure is of interest in physics, it is the
$\mathcal{R}$-structure that allows us to introduce
cohomological methods.

The twisted loop Lie conformal superalgebra construction
just mentioned is highly reminiscent of the way in which
the affine Kac--Moody Lie algebras, which a priori are
defined by generators and relations, are explicitly
realized in \cite{KK}.   This classification and
construction has recently been established by means of
non-abelian \'etale cohomology \cite{P1}. Because of the
very special nature of the algebraic fundamental group of
the ring $\C[t^{\pm 1}]$, the loop algebras based on a
finite-dimensional simple Lie algebra $\g$ are
parameterized by the conjugacy classes of the finite group
of symmetries of the corresponding Coxeter--Dynkin diagram.
In particular, if the automorphism group of $\g$ is {\it
connected}, then {\it all loop algebras based on $\g$ are
trivial}, i.e., isomorphic to $\g \otimes_{\C} \C[t^{\pm
1}].$ As the work of Schwimmer and Seiberg had shown,
herein lies a story.

A loop inspired procedure  as a mean of realizing  the
$N$-superconformal Lie algebras is given in the
striking\footnote{Certainly to mathematicians.} \cite{SS}.
In the $N = 2$ case, the authors show that all the objects
in what was until then thought to be an ``infinite''  family
of superconformal algebras were in fact isomorphic, and
that only two non-isomorphic classes of superconformal Lie
algebras existed.\footnote{Strictly speaking their work
shows that {\it at most} two classes exist. We entertain no
doubts that it was obvious to the authors that the two
classes were different.} This agrees precisely with the
what cohomological point of view for algebras would predict
since in the $N=2$ case the automorphism group $\Oth_2$ of
the corresponding Lie conformal superalgebra has two
connected components. By contrast, Schwimmer and Seiberg
put forward an infinite family of non-isomorphic $N=4$
superconformal Lie algebras\,$\ldots$\,even though in this case the
automorphism group is connected! How a connected group of
automorphism could lead to an infinite family of loop
objects, and which cohomology can be used to determine
them, is explained in \cite{KLP}. The crucial idea is that,
unlike the case of algebras, the base ring $\C[t^{\pm 1}]$
does not contain enough information to geometrically
measure superconformal Lie algebras. It must be replaced by
the complex {\it differential ring} $(\C[t^{\pm 1}],
\delta)$ where $\delta = \frac{d}{dt}.$ One could in fact
take an arbitrary derivation. For example, the case of
$\delta = 0$ leads to the classification of ``current
algebras''.

Forms of a given $\mathcal{R}$-Lie conformal superalgebra
$(\mathcal{A},\partial)$ are classified in terms of the
non-abelian cohomology pointed set
$H^1\big(\mathcal{R},\mathbf{Aut}(\mathcal{A})\big)$ as
explained in \cite{KLP}. In this paper, we focus on the
$N=1,2,3$ Lie conformal superalgebras. The classification
of their twisted loop Lie conformal superalgebras will be
carried by explicitly computing their automorphism group
functors and the corresponding non-abelian cohomology sets.

Concretely, we first review the general theory of
differential Lie conformal algebras developed in \cite{KLP}
in Section  2. Then classification will be accomplished
along the following lines. In Section  3, we compute the
automorphism group functor of the $N=1,2,3$ Lie conformal
superalgebras $\mathcal{K}_N$. The construction is quite
explicit and we believe of interest to physicists. In
Section 4, we complete the classification of forms of
$\mathcal{K}_N\otimes_\C\mathcal{R}$ up to isomorphism of
$\mathcal{R}$-Lie conformal algebras by compute the
corresponding nonabelian cohomology set. Centroid
considerations are then used to show that no information is
lost in the passage from $\mathcal{R}$ to $\C.$ Finally in
Section  5, we pass from twisted loop Lie conformal
superalgebras to their corresponding superconformal Lie
algebras, and again show that no collapse occurs to the
isomorphism classes. This completes the
classification.\bigskip

\noindent\textbf{Notation:} $\Z,\Bbb N,\Q~\mathrm{and}~\C $
denote the integers, non-negative integers, rational
numbers and complex numbers respectively. $\mathbf{i}$ will
denote $\sqrt{-1}$. For a cycle $(ijl)$, $\epsilon_{ijl}$
denotes its sign.

$\mathcal{R},\mathcal{S}_m,\widehat{\mathcal{S}}$ always
denote the complex differential algebras
$(\C[t^{\pm1}],\delta_t)$, $(\C[t^{\pm1}],\delta_t),
(\C[t^q,q\in\Q], \delta_t)$, where $\delta_t=\frac{d}{dt}$.

\section{Preliminaries}\label{secP}

In this section, we will review the general theory of
differential Lie conformal superalgebras developed in
\cite{KLP}.\footnote{This concept is not to be confused
with that of superconformal Lie algebra as it appears in
the physics literature. The two concepts are related, as we
will explain in Remark \ref{terminology}.}

Motivated by the affinization of complex Lie conformal
superalgebras defined in \cite{K}, Lie conformal
superalgebras over an arbitrary  differential rings were
introduced in \cite{KLP}. In this paper, we only consider
the Lie conformal superalgebras over a complex differential
ring $\mathcal{D}.$ Thus $\mathcal{D}=(D,\delta)$ is a pair
consisting of a commutative associative algebra $D$ over
$\C$ and a $\C$-linear derivation $\delta:D\rightarrow D$.
Complex differential rings form a category, where a morphism
$f:(D,\delta)\rightarrow(D',\delta')$ is a $\C$-linear ring
homomorphism $f:D\rightarrow D$ such that
$f\circ\delta=\delta'\circ f$.

\begin{definition}[{(\cite{KLP}, Definition 1.3)}]
Let $\mathcal{D}=(D,\delta)$ be a complex differential
ring, a $\mathcal{D}$-Lie conformal superalgebra is a
triple
$(\mathcal{A},\partial_{\mathcal{A}},(-_{(n)}-)_{n\in\Bbb
N})$ consisting of
\begin{quote}
\begin{enumerate}
\item[$\mathrm{(i)}$] a $\Z/2\Z$-graded $D$-module $\mathcal{A}=\mathcal{A}_\even\oplus\mathcal{A}_{\odd}$, \item[$\mathrm{(ii)}$] $\partial_{\mathcal{A}}\in\mathrm{End}_{\C}(\mathcal{A})$ stabilizing the even and odd parts of $\mathcal{A}$, and
\item[$\mathrm{(iii)}$] a $\C$-bilinear product $(a,b)\mapsto a_{(n)}b, a,b\in\mathcal{A}$ for each $n \in \Bbb N$,
\end{enumerate}
\end{quote}
satisfying the following axioms for $r\in D, a,b,c\in\mathcal{A}, m,n\in\Bbb N$:
\begin{quote}
\begin{enumerate}
\item[$\mathrm{(CS0)}$] $a_{(n)}b=0$ for $n\gg0$,
\item[$\mathrm{(CS1)}$] $\partial_{\mathcal{A}}(a)_{(n)}b=-na_{(n-1)}b$ and $a_{(n)}\partial_{\mathcal{A}}(b)=\partial_{\mathcal{A}}(a_{(n)}b)+na_{(n-1)}b$,
\item[$\mathrm{(CS2)}$] $\partial_{\mathcal{A}}(ra)=r\partial_{\mathcal{A}}(a)+\delta(r)a$,
\item[$\mathrm{(CS3)}$] $a_{(n)}(rb)=r(a_{(n)}b)$ and $(ra)_{(n)}b=\sum_{j\in\Bbb N}\delta^{(j)}(r)(a_{(n+j)}b)$,
\item[$\mathrm{(CS4)}$] $a_{(n)}b=-p(a,b)\sum_{j\in\Bbb N}(-1)^{j+n}\partial_{\mathcal{A}}^{(j)}(b_{(n+j)}a)$, and
\item[$\mathrm{(CS5)}$] $a_{(m)}(b_{(n)}c)=\sum_{j=0}^m\binom{m}{j}(a_{(j)}b)_{(m+n-j)}c+p(a,b)b_{(n)}(a_{(m)}c)$,
\end{enumerate}
\end{quote}
where $\delta^{(j)}=\frac{1}{j!}\delta^j,
\partial_{\mathcal{A}}^{(j)}=\frac{1}{j!}\partial_{\mathcal{A}}^{j},
j\in\Bbb N$ and $p(a,b)=(-1)^{p(a)p(b)}$, $p(a)$ (resp.
$p(b)$) is the parity of $a$ (resp. $b$).
\end{definition}

We also use the $\lambda$-bracket convention  to simplify notation. Recall then that
\[
[a_\lambda b]=\sum\limits_{n\in\Bbb N}\lambda^{(n)}a_{(n)}b,
\]
where $\lambda$ is a variable, $\lambda^{(n)}=\frac{1}{n!}\lambda^n$ and $a,b \in \mathcal{A}$.

If we consider  $\C$  as a differential ring with
$\delta=0$, the $\C$-Lie conformal superalgebras defined
above coincide with the usual definition found in \cite{K}.

Let $\mathcal{D}=(D,\delta)\rightarrow\mathcal{D}'=(D',\delta')$
be an extension of  differential rings. A
$\mathcal{D}'$-Lie conformal superalgebra can be viewed as
a $\mathcal{D}$-Lie conformal superalgebra in the natural
way. Conversely, for a $\mathcal{D}$-Lie conformal
superalgebra $(\mathcal{A},\partial_{\mathcal{A}})$, there
is a $\mathcal{D'}$-Lie conformal superalgebras structure
on $\mathcal{A}\otimes_{\mathcal{D}}\mathcal{D}'$ given by
\[
\partial_{\mathcal{A}\otimes_{\mathcal{D}}\mathcal{D}'}(a\otimes r)=\partial_{\mathcal{A}}(a)\otimes r+a\otimes\delta(r),
\]
for $a\in\mathcal{A}, r\in D'$, and
\[
(a\otimes f)_{(n)}(b\otimes g)=\sum\limits_{j\in\Bbb N}(a_{(n+j)}b)\otimes \delta^{(j)}(f)g,
\]
for $a,b\in\mathcal{A}, f,g\in D, n\in\Bbb N.$ With base change defined we make the following definition.

\begin{definition}
Let $\mathcal{D}\rightarrow\mathcal{D'}$ be an extension of
differential rings and $(\mathcal{A},\partial)$ a
$\mathcal{D}$-Lie conformal superalgebra, a
$\mathcal{D}'/\mathcal{D}$-form of $\mathcal{A}$ is a
$\mathcal{D}$-Lie conformal superalgebra $\mathcal{L}$ such
that
\[
\mathcal{L}\otimes_{\mathcal{D}}\mathcal{D}'\cong\mathcal{A}\otimes_{\mathcal{D}}\mathcal{D}'
\]
as $\mathcal{D}'$-conformal superalgebras.
\end{definition}

If the extension $\mathcal{D}'/\mathcal{D}$ is a faithfully
flat extension, i.e., if the ring extension $D\rightarrow
D'$ is faithfully flat, the set of isomorphism classes of
$\mathcal{D}'/\mathcal{D}$-forms of a $\mathcal{D}$-Lie
conformal superalgebra $\mathcal{A}$ is identified with the
non-abelian \v{C}ech cohomology point set
$H^1\big(\mathcal{D}'/\mathcal{D},\mathbf{Aut}(\mathcal{A})\big)$,
where $\mathbf{Aut}(\mathcal{A})$ is the group functor from
the category of differential extensions of $\mathcal{D}$ to
the category of groups which assigns to an extension
$\mathcal{D}'$ of $\mathcal{D}$ the group
$\mathbf{Aut}(\mathcal{A})(\mathcal{D}')$ of automorphisms
of the $\mathcal{D}'$-Lie conformal superalgebra
$\mathcal{A}\otimes_{\mathcal{D}} \mathcal{D}'$ (see
Theorem~2.16 of~\cite{KLP}).

\begin{remark}\label{grpfunrmk}
The $\mathcal{D}$-group functor $\mathbf{Aut}(\mathcal{A})$
plays a key role in the classification of
$\mathcal{D}'/\mathcal{D}$-forms of a given
$\mathcal{D}$-Lie conformal superalgebra $\mathcal{A}$. Let
$\mathcal{L}$ be such a form. We make the following
definition for future use. We say that a subgroup functor
$\mathcal{F}$  of $\mathbf{Aut}(\mathcal{L})$ is {\it
representable} if there exists a scheme $\mathfrak{X}$ over
${\rm Spec}(D)$ such that $\mathcal{F}(\mathcal{E}) \simeq
{\rm Hom}_{D\text{-sch}}({\rm Spec}(E), \mathfrak{X})$ for
every differential extension $\mathcal{E} = (E ,
\delta_{\mathcal{E}})$ of $\mathcal{D}$, where the
identifications  are ``functorial on $\mathcal{E}$''.
Recall that by Yoneda's correspondence if $\mathfrak{X} =
{\rm Spec}(A)$ is affine, then
\[
{\rm Hom}_{D\text{-sch}}({\rm Spec}(E), \mathfrak{X}) \simeq {\rm Hom}_{D\text{-alg}}(A, E).
\]
\end{remark}

In this paper, we are mainly interested in the twisted loop
Lie conformal superalgebra
$\mathcal{L}(\mathcal{A},\sigma)$ based on a complex Lie
conformal superalgebra $(\mathcal{A},\partial)$ with
respect to an automorphism
$\sigma:\mathcal{A}\rightarrow\mathcal{A}$ of finite order
$m$. Recall that this is defined by
\[
\mathcal{L}(\mathcal{A},\sigma)=\bigoplus\limits_{n\in\Z}\mathcal{A}_{n}\otimes t^{n/m}\subseteq\mathcal{A}\otimes_{\C}\mathcal{S}_m,
\]
where $\mathcal{A}_n=\{a\in\mathcal{A}|\sigma(a)=\zeta_m^na\}, n\in\Z$, $\zeta_m = e^{\frac{2\pi\bf{i}}{m}}$  and $\mathcal{S}_m=(\C[t^{\pm\frac{1}{m}}],\delta_t)$.

The twisted loop algebra $\mathcal{L}(\mathcal{A},\sigma)$
is not only a complex Lie conformal superalgebra but also
an $\mathcal{R}$-Lie conformal superalgebra for
$\mathcal{R}=(\C[t^{\pm1}],\delta_t)$. It is {\it
trivialized} by the extension $\mathcal{S}_m/\mathcal{R}$,
hence also by $\widehat{\mathcal{S}}=\lim\limits_{\longrightarrow}(\mathcal{S}_m,\delta_t).$\footnote{The
introduction of $\widehat{\mathcal{S}}$ is a useful
artifice that allows us to compare loop algebras of
automorphisms of arbitrary order all at once.} This means
that after applying the base change $\mathcal{R} \to
\widehat{\mathcal{S}}$ our object ``splits''. To be
precise,
\[
\mathcal{L}(\mathcal{A},\sigma) \otimes_{\mathcal{R}} \widehat{\mathcal{S}} \simeq \mathcal {A} \otimes_\C \widehat{\mathcal{S}} \simeq (\mathcal {A} \otimes_\C \mathcal{R})\otimes_{\mathcal {R}} \widehat{\mathcal{S}},
\]
where all of the above are isomorphisms of
$\widehat{\mathcal{S}}$-Lie conformal superalgebras. In
other words, $\mathcal{L}(\mathcal{A},\sigma)$ is {\it an}
$\widehat{\mathcal{S}}/\mathcal{R}$-{\it form of}
$\mathcal{A}\otimes_\C\mathcal{R}$.\footnote{The spirit of
this construction is not unlike that of vector bundles,
which locally look like a well-understood object, namely
the trivial bundle. The difference is in the meaning of
``locally'', which is to be understood not for the usual
topology of $\C^\times$, but in a differential version of
Grothendieck's {\it fppf}-topology on ${\rm Spec}(R).$ The
trivialization takes place by using ``only one open set'',
namely ${\rm Spec}(S).$} We will denote in what follows the
$\mathcal{R}$-conformal superalgebra
$\mathcal{A}\otimes_\C\mathcal{R}$ by
$\mathcal{A}_{\mathcal{R}}.$

Since $\widehat{\mathcal{S}}/\mathcal{R}$ is faithfully
flat, the set of $\mathcal{R}$-conformal isomorphism
classes of $\widehat{\mathcal{S}}/\mathcal{R}$-forms of
$\mathcal{A}\otimes_\C\mathcal{R}$ is identified with the
non-abelian \v{C}ech cohomology pointed set
$H^1\big(\widehat{\mathcal{S}}/\mathcal{R},\mathbf{Aut}(\mathcal{A}_{\mathcal{R}})\big)$.

Furthermore, if $\mathcal{A}_{\mathcal{R}}$ satisfies the following finiteness condition:
\begin{enumerate}
\leftskip5pt
\item[\textbf{(Fin)}] There exist finitely many elements $a_1,\ldots,a_n\in\mathcal{A}_{\mathcal{R}}$ such that the set $\{\partial_{\mathcal{A}}^\ell(ra_i)|r\in R,\ell\geq0\}$ spans $\mathcal{A}_{\mathcal{R}}$.
\end{enumerate}
(which holds in all of the cases we are interested in), the
$H^1\big(\widehat{\mathcal{S}}/\mathcal{R},\mathbf{Aut}(\mathcal{A}_{\mathcal{R}})\big)$
we are after can be identified with the non-abelian
continuous cohomology
$H^1\big(\pi_1(R),\mathbf{Aut}(\mathcal{A}_{\mathcal{R}})(\widehat{\mathcal{S}})\big)$,
where $\pi_1(R)$ is the algebraic fundamental group of
$\mathrm{Spec}(R)$ at the geometric point
$\mathrm{Spec}\big(\overline{\C(t)}\big)$ (See \cite{KLP},
Proposition~2.29).

In particular for a complex Lie conformal superalgebra
$\mathcal{A}$ we can consider $\mathbf{Aut}(\mathcal{A})$,
a functor that can be evaluated at any differential ring to
yield, in doing so,  a group. The group functor
$\mathbf{Aut}(\mathcal{A}_{\mathcal{R}})$ considered above
is nothing but the restriction of
$\mathbf{Aut}(\mathcal{A})$ to the category of differential
extensions of $\mathcal{R}$ (such extensions are by
definition naturally endowed with a  differential ring
structure). In particular $\mathbf{Aut}(\mathcal{A})(\mathcal{\widehat{S}}) =
\mathbf{Aut}(\mathcal{A}_{\mathcal{R}})(\mathcal{\widehat{S}})$.
This makes the \hbox{cohomology} set
$H^1\big(\widehat{\mathcal{S}}/\mathcal{R},\mathbf{Aut}(\mathcal{A}_{\mathcal{R}})\big)$
computable, which yields the \hbox{classification} of
$\widehat{\mathcal{S}}/\mathcal{R}$-forms of
$\mathcal{A}\otimes_\C\mathcal{R}$ {\it up to isomorphisms
of $\mathcal{R}$-Lie conformal superalgebras.} Then the
centroid trick can be brought in to obtain the passage from
the classification of $\mathcal{R}$-Lie conformal
superalgebras to  that of complex Lie conformal
superalgebras.

In this paper, we consider twisted loop conformal
superalgebras base on complex Lie conformal superalgebras
$\mathcal{K}_N, N=1,2,3$ which are described in \cite{K} as
follows:

Let $\Lambda(N)$ be the Grassmann algebra over $\C$ in $N$
variables $\xi_1,\ldots,\xi_N$. $\Lambda(N)$ has a
$\Z/2\Z$-grading given by setting each $\xi_i,
i=1,\ldots,N$ to be odd. This gives $\Lambda(N)$ the
structure of a complex superalgebra. $\Lambda(N)$ also has
a $\Z$-grading in which each $\xi_i, i=1,\ldots,N$ has
degree $1$. If $f$ is homogeneous with respect to the
$\Z$-grading, we denote by $|f|$ the degree of $f$. The
complex Lie conformal superalgebra $\mathcal{K}_N$ is the
naturally  $\Z/2\Z$-graded complex vector space
\[
\mathcal{K}_N=\C[\partial]\otimes_\C\Lambda(N)
\]
equipped with the $n$th product for each $n\in\Bbb N$ given by
\begin{align*}
f_{(0)}g&=\left(\frac{1}{2}|f|-1\right)\partial\otimes fg+\frac{1}{2}(-1)^{|f|}\sum_{i=1}^N(\partial_i f)(\partial_i g),\\
f_{(1)}g&=\left(\frac{1}{2}(|f|+|g|)-2\right)fg,\\
f_{(n)}g&=0, \quad n\geqslant2,
\end{align*}
where $f,g\in\Lambda(N)$ are homogenous elements with
respect to the $\Z$-grading, and $\partial_i$ is the
derivative with respect to $\xi_i$, $i=1,\ldots,N$.

\begin{remark}\label{terminology}
Every twisted loop Lie conformal superalgebra
$\mathcal{L}(\mathcal{A},\sigma)$ based on a complex Lie
conformal superalgebra $(\mathcal{A},\sigma)$ naturally
corresponds to a complex Lie superalgebra:
\[
\mathrm{Alg}(\mathcal{A},\sigma)=\mathcal{L}(\mathcal{A},\sigma)/\left(\partial+\frac{d}{dt}\right)\mathcal{L}(\mathcal{A},\sigma),
\]
where  the Lie bracket comes from the zeroth product of
$\mathcal{L}(\mathcal{A},\sigma).$ The classification of
loop algebras based on a complex Lie conformal superalgebra
$\mathcal{A}$ is thus related, in an essential and
meaningful way, to that of their corresponding Lie
superalgebras. The Lie superalgebras corresponding to the
twisted loop superalgebras base on complex conformal
superalgebras $\mathcal{K}_N, N=1,2,3$ are exactly the
$N=1,2,3$ {\it superconformal Lie algebras} described in
\cite{SS}. We shall study these algebras in Section 5.
\end{remark}

\section{Automorphism groups of the $N=1,2,3$ Lie conformal superalgebras}\label{sec2}

The underlying vector space  of the complex Lie conformal
superalgebras $\mathcal{A}$ that we are interested are all
of the form
\[
\mathcal{A}=\C[\partial]\otimes_\C V
\]
for some complex vector superspace $V.$  In particular,
$\mathcal{A}$  is a free $\C[\partial]$-supermodule, and
has a natural $\Z$-grading (as a superspace) with
$\mathcal{A}_n = \C\partial^n \otimes_\C V.$ Let $\phi \in
{\bf Aut}(\mathcal{A}).$ Since $\phi$ commutes with the
action of $\partial$ to ask that $\phi$  preserve the
$\Z$-grading is the same than to ask that $\phi(1 \otimes
V) \subseteq 1 \otimes V.$\footnote{The inclusion is
necessarily an equality since $\phi$ is surjective.} This
leads us to consider a natural subgroup functor
$\mathbf{GrAut}(\mathcal{A})$ of
$\mathbf{Aut}(\mathcal{A})$ defined by
\[
\mathbf{GrAut}(\mathcal{A})(\mathcal{D})=\left\{\phi\in\mathbf{Aut}(\mathcal{A})(\mathcal{D})\middle|
\phi\left(\C\otimes_\C V\otimes_\C D\right)\subseteq\C\otimes_\C V\otimes_\C D\right\},
\]
for all complex differential ring $\mathcal{D}=(D,\delta).$
The preservation of degrees is a natural condition to
impose, and it carries also a physical meaning.

For $\mathcal{K}_N, N=1,2,3$, the subgroup functor is
defined by
\begin{align*}
&\mathbf{GrAut}(\mathcal{K}_N)(\mathcal{D})\\&\qquad{}=\left\{\phi\in\mathbf{Aut}(\mathcal{K}_N)(\mathcal{D})\middle|
\phi\left(\C\otimes_\C \Lambda(N)\otimes_\C D\right)\subseteq\C\otimes_\C\Lambda(N)\otimes_\C D\right\},
\end{align*}
where $\mathcal{D}=(D,\delta)$ is an arbitrary complex
differential ring.

The next theorem gives a  precise characterization of the
group functor $\mathbf{GrAut}(\mathcal{K}_N)$, $N=1,2,3$.
We will also show that the functors
$\mathbf{GrAut}(\mathcal{K}_N)$ and
$\mathbf{Aut}(\mathcal{K}_N)$ are equal when evaluated at
complex differential rings that are integral domains, in
particular at $\widehat{\mathcal{S}}=(\C[t^q,q\in\Q],\frac{d}{dt}).$

To simplify our notation, we set
$\widehat{\partial}=\partial\otimes\mathrm{id}+\mathrm{id}\otimes\delta$,
which can be viewed as an operator on
$\mathcal{K}_N\otimes_{\C}\mathcal{D}$. And for convenience
we identify $f\in\Lambda(N)$ with its image $(1\otimes
f)\otimes1$ in $\mathcal{K}_N\otimes_{\C}\mathcal{D}$.

\begin{theorem}\label{auto-thm}
Let $\mathcal{D}=(D,\delta)$ be an arbitrary complex differential ring. For $N=1,2,3$, there is an isomorphism  of groups
\[
\iota_{\mathcal{D}}:\Oth_N(D)\overset{\sim}{\rightarrow}\mathbf{GrAut}(\mathcal{K}_N)(\mathcal{D}), \quad A=(a_{ij})_{N\times N}\mapsto \phi_A,
\]

\removelastskip\pagebreak

\noindent where
$\phi_A\in\mathbf{GrAut}(\mathcal{K}_N)(\mathcal{D})$ is
given by the following data:
\begin{itemize}
\item For $N=1$, $\phi_A(1)=1\quad\text{ and }\quad\phi_A(\xi_1)=\xi_1\otimes a_{11}.$
\item For $N=2$,
\begin{align*}
&\phi_A(1)=1+\xi_1\xi_2\otimes r,&&\phi_A(\xi_1)=\xi_1\otimes a_{11}+\xi_2\otimes a_{21},\\
&\phi_A(\xi_1\xi_2)=\xi_1\xi_2\otimes\det(A),&&\phi_A(\xi_2)=\xi_1\otimes a_{12}+\xi_2\otimes a_{22},
\end{align*}
where $\begin{pmatrix}0&r\\
-r&0\end{pmatrix}=2\delta(A)A^T$.
\item For $N=3$,
\begin{align*}
&\phi_A(1)=1+\sum\limits_{l=1}^3\epsilon_{mnl}\xi_m\xi_n\otimes r_l, &&\phi_A(\xi_j)=\sum\limits_{l=1}^3\xi_l\otimes a_{lj}+\xi_1\xi_2\xi_3\otimes s_j, \\
&\phi_A(\xi_1\xi_2\xi_3)=\xi_1\xi_2\xi_3\otimes\det(A),&&\phi_A(\xi_i\xi_j)=\epsilon_{ijl}\sum\limits_{l'=1}^3\epsilon_{mnl'}\xi_{m}\xi_{n}\otimes A_{l'l},
\end{align*}
$i,j=1,2,3, i\neq j$, where $A_{l'l}$ is the cofactor of $a_{l'l}$ in $A$ and
\begin{align*}
\begin{pmatrix}0&r_3&-r_2\\
-r_3&0&r_1\\
r_2&-r_1&0\end{pmatrix}&=2\delta(A)A^\mathrm{T},\\
\begin{pmatrix}0&s_3&-s_2\\
-s_3&0&s_1\\
s_2&-s_1&0\end{pmatrix}&=2(\det A)A^\mathrm{T}\delta(A).
\end{align*}
\end{itemize}
This construction is functorial on $\mathcal{D}.$ In particular $\Oth_N \simeq \mathbf{GrAut}(\mathcal{K}_N).$
\end{theorem}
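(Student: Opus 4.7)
The plan is to construct $\iota_{\mathcal{D}}$ explicitly, verify it lands in $\mathbf{GrAut}(\mathcal{K}_N)(\mathcal{D})$, check the group-homomorphism property and bijectivity, and finally observe functoriality. Given $A\in\Oth_N(D)$, the displayed formulas define $\phi_A$ on $\Lambda(N)\otimes D\subseteq\mathcal{K}_N\otimes_\C\mathcal{D}$; I extend it to all of $\C[\partial]\otimes\Lambda(N)\otimes D$ by the unique rule $\phi_A\circ\widehat{\partial}=\widehat{\partial}\circ\phi_A$. Bijectivity on the underlying module is automatic: the induced endomorphism of $\Lambda(N)\otimes D$ is the exterior-algebra automorphism of $\Lambda(\C^N)$ induced by $A$, modified by strictly higher-degree (hence nilpotent) corrections, and such a perturbation of an invertible map is invertible.

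The substantive point is that $\phi_A$ preserves every $n$-th product. Since $a_{(n)}b=0$ for $n\geq 2$ and axiom (CS1) controls the behaviour on $\partial$-translates, it suffices to verify $(-_{(0)}-)$ and $(-_{(1)}-)$ on pairs in $\Lambda(N)\otimes D$. The $(1)$-product is, up to a degree-dependent scalar, multiplication in $\Lambda(N)$; matching $\phi_A(\xi_i)_{(1)}\phi_A(\xi_j)$ with $-\phi_A(\xi_i\xi_j)$ forces $\phi_A$ on $\wedge^k\langle\xi_1,\ldots,\xi_N\rangle$ to be the $k$-th exterior power of $A$, which recovers the cofactor formulas of the statement and the identity $\phi_A(\xi_1\xi_2\xi_3)=\xi_1\xi_2\xi_3\otimes\det A$. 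For the $(0)$-product, axiom (CS3) yields the expansion
\[
(\xi_l\otimes a_{li})_{(0)}(\xi_m\otimes a_{mj})=(\xi_{l(0)}\xi_m)\otimes a_{li}a_{mj}+(\xi_{l(1)}\xi_m)\otimes\delta(a_{li})a_{mj},
\]
and comparison with $\phi_A(\xi_{i(0)}\xi_j)$ splits into two demands: the $\xi$-free part gives $\sum_l a_{li}a_{lj}=\delta_{ij}$, that is $AA^{\mathrm{T}}=I$, while the $\xi_l\xi_m$-part, after antisymmetrization, identifies $2\delta(A)A^{\mathrm{T}}$ with the antisymmetric matrix encoding the $r$ (or $r_l$) appearing in $\phi_A(1)$. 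For $N=3$, the $s_j$ are extracted identically by matching $\phi_A(\xi_i)_{(0)}\phi_A(\xi_j\xi_k)$ with $\phi_A\bigl(\xi_{i(0)}(\xi_j\xi_k)\bigr)$, producing the identity $2(\det A)A^{\mathrm{T}}\delta(A)$.

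The remaining claims are easier. The identity $\phi_{AB}=\phi_A\phi_B$ can be checked on the generators $\xi_j$, where it reduces to matrix multiplication, and propagates to the rest of $\Lambda(N)\otimes D$ by multiplicativity under the $(1)$-product and to $\partial$-translates by $\widehat{\partial}$-equivariance. Injectivity of $\iota_{\mathcal{D}}$ is immediate, since $A$ is read off from the coefficients of $\xi_l$ in $\phi_A(\xi_j)$. For surjectivity, any $\phi\in\mathbf{GrAut}(\mathcal{K}_N)(\mathcal{D})$ sends $\xi_j$ into $\Lambda(N)_{\mathrm{odd}}\otimes D$ by parity, hence $\phi(\xi_j)=\sum_l\xi_l\otimes a_{lj}$ (plus a possible $\xi_1\xi_2\xi_3\otimes s_j$ term when $N=3$); the bracket computations of the previous paragraph, read in reverse, show that $A=(a_{lj})\in\Oth_N(D)$ and $\phi=\phi_A$. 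Functoriality in $\mathcal{D}$ is transparent because every defining formula is polynomial in the $a_{ij}$ and $\delta(a_{ij})$, which gives the group-functor isomorphism $\Oth_N\simeq\mathbf{GrAut}(\mathcal{K}_N)$. The main obstacle will be the $N=3$ bookkeeping: the cofactor matrix governs $\phi_A(\xi_i\xi_j)$ and the two distinct correction vectors $(r_l)$ and $(s_j)$ must be derived and shown mutually consistent, which relies on the $\Oth_3$-identity $\mathrm{adj}(A)=(\det A)A^{\mathrm{T}}$ together with the differentiated orthogonality $\delta(A)A^{\mathrm{T}}+A\,\delta(A)^{\mathrm{T}}=0$.
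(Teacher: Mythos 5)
Your overall route coincides with the paper's: define $\phi_A$ on $\Lambda(N)\otimes D$, extend by $\widehat{\partial}$-equivariance, reduce bracket-preservation to the $(0)$- and $(1)$-products on generators, get injectivity by reading off coefficients, and get surjectivity from parity plus the grading condition. (Your nilpotent-perturbation argument for invertibility of the module map is, if anything, cleaner than the paper's explicit check that $\phi_{A^{-1}}$ is an inverse.) However, there is one genuine gap, in surjectivity: you assert that the bracket computations ``read in reverse'' show $A\in\Oth_N(D)$, and over an arbitrary complex differential ring this is false without a further input. A priori $\phi(1)=1\otimes c+(\text{degree-two terms})$ with unknown $c\in D$, and the identities you can actually extract --- $\phi(\xi_i)_{(0)}\phi(\xi_j)$ against $-\tfrac12\phi(1)$ and $-\tfrac12\widehat{\partial}\phi(\xi_i\xi_j)$, together with $\phi(1)_{(1)}\phi(1)=-2\phi(1)$ --- yield only $A^{\mathrm{T}}A=cI$, $c^2=c$ and $c\,a_{ij}=a_{ij}$ (for $N=1$, only $r^4=r^2$). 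Since $\mathcal{D}$ is arbitrary, $D$ may have idempotents: over $D=\C\times\C$ with $\delta=0$, the assignment $\xi_1\mapsto\xi_1\otimes(1,0)$, $1\mapsto1\otimes(1,0)$ satisfies every bracket identity on generators, yet $(1,0)\notin\Oth_1(D)$ --- it defines a conformal endomorphism, not an automorphism. So the constraints coming from the products alone do not suffice; the invertibility of $\phi$ must be used explicitly, and nothing in your sketch does so.

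The paper supplies precisely this step: since $\phi^{-1}$ also lies in $\mathbf{GrAut}(\mathcal{K}_N)(\mathcal{D})$, writing $\phi^{-1}(\xi_j)$ in the same form and composing on the degree-one components gives $AB=I$, so $A$ is invertible and $\det(A)$ is a unit; then $\phi(1)_{(1)}\phi(\xi_1\cdots\xi_N)=\bigl(\tfrac{N}{2}-2\bigr)\phi(\xi_1\cdots\xi_N)$ forces $c\det(A)=\det(A)$, hence $c=1$, after which your computations do give $A^{\mathrm{T}}A=I$ and the stated $r$- and $s$-matrices (for $N=1$, $r$ being a unit turns $r^4=r^2$ into $r^2=1$). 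A smaller understatement: for $N=3$ the identity $\phi_{AB}=\phi_A\circ\phi_B$ on the $\xi_j$ is not pure matrix multiplication --- the top-degree corrections must satisfy $s^{AB}=B^{\mathrm{T}}s^{A}+\det(A)\,s^{B}$, which does follow from $\delta(AB)=\delta(A)B+A\,\delta(B)$ and orthogonality, but this is exactly the $N=3$ bookkeeping you deferred, not a formality. With the invertibility step inserted, your argument is correct and is essentially the proof in the paper.
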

\begin{proof}
For $A\in\Oth_N(D)$, the data given in the theorem defines
an $D$-module homomorphism
$\phi_A:\Lambda(N)\otimes_{\C}D\rightarrow\Lambda(N)\otimes_{\C}D$.
One establishes that $\phi_A$ is a module isomorphism by
explicitly checking that $\phi_{A^{-1}}$ is its inverse.
Then $\phi_A$ can be extended to a unique $D$-module
isomorphism
$\mathcal{K}_N\otimes_\C\mathcal{D}\rightarrow\mathcal{K}_N\otimes_\C
\mathcal{D}$, which is also denoted by $\phi_A$, such that
$\widehat{\partial}\circ\phi_A=\phi_A\circ\widehat{\partial}$.
A direct computation shows that
\[
\phi_A([f\otimes1_\lambda g\otimes1])=[\phi_A(f\otimes1)_\lambda\phi_A(g\otimes1)]
\]
for all $f,g\in\Lambda(N).$ Since
$\mathcal{K}_N=\C[\partial]\otimes_\C \Lambda(N)$ is a free
$\C[\partial]$-module, we deduce that
$\phi_A\in\mathbf{Aut}(\mathcal{K}_N)(\mathcal{D})$ by
\cite{KLP}, Lemma~3.1(ii). Moreover, we observe that
$\phi_A\in\mathbf{GrAut}(\mathcal{K}_N)(\mathcal{D})$ by
the definition of $\phi_A$.

The above yields a function
$\iota_{\mathcal{D}}:\Oth_N(D)\rightarrow\mathbf{GrAut}(\mathcal{K}_N)(\mathcal{D})
$ such that $A\mapsto\phi_A.$ For
$\phi_1,\phi_2\in\mathbf{Aut}(\mathcal{K}_N)(\mathcal{D})$,
$\phi_1=\phi_2$ if and only if $\phi_1(f)=\phi_2(f)$ for
all $f\in\Lambda(N)$ by \cite{KLP},
Lemma~3.1(i).\footnote{This useful result will be used
repeatedly in what follows without further reference.} In
particular, for $A,B\in\Oth_N(D)$, $\phi_{AB}(f)=
(\phi_A\circ\phi_B)(f)$ for all $f\in\Lambda(N)$, thus
$\phi_{AB}=\phi_A\circ\phi_B$, i.e., $\iota_{\mathcal{D}}$
is a group homomorphism. By the same reasoning $A=B$ if
$\phi_A(f)=\phi_B(f)$ for all $f\in\Lambda(N)$, so that
$\iota_{\mathcal{D}}$ is injective.

It remains to show that $\iota_{\mathcal{D}}$ is
surjective, namely that given an automorphism $\phi \in
\mathbf{GrAut}(\mathcal{K}_N)(\mathcal{D})$ there exists
$A\in\Oth_N(D)$ such that $\phi=\phi_A$.\vskip12pt

\textbf{Case $N=1$:} Since $\phi(\C\otimes\Lambda(1)\otimes D)\subseteq\C\otimes\Lambda(1)\otimes D$ and $\phi$ preservers the $\Z/2\Z$-grading of $\mathcal{K}_1\otimes_\C \mathcal{D}$, we may assume that
$\phi(\xi_1)=\xi_1\otimes r$ for some $r\in D$.

Since $\phi^{-1}\in\mathbf{GrAut}(\mathcal{K}_N)(\mathcal{D})$,
we may also assume that $\phi^{-1}(\xi_1)=\xi_1\otimes s$
for some  $s\in D$. Then
$\phi\circ\phi^{-1}(\xi_1)=\xi_1\otimes rs=\xi_1\otimes1$.
Thus $rs=1$. i.e., $r$ is an unit in $D$.

We deduce from $\phi(\xi_1)=\xi_1\otimes r$ that
$\phi(1)=-2\phi(\xi_1)_{(0)}\phi(\xi_1)=1\otimes r^2$.
While, $\phi(1)_{(1)}\phi(1)=-2\phi(1)$ implies that
$r^4=r^2$. Since  $r$ is an unit in $D$ we obtain $r^2=1.$
In particular, $r\in\Oth_1(D)$ and
$\phi=\phi_{r}.$\vskip12pt

\textbf{Case $N=2$:} $\phi(\C\otimes\Lambda(2)\otimes D)\subseteq\C\otimes\Lambda(2)\otimes D$ and $\phi$ preservers the $\Z/2\Z$-grading of $\mathcal{K}_2\otimes_\C \mathcal{D}$ yield
\[
\phi(\xi_1)=\xi_1\otimes a_{11}+\xi_2\otimes a_{21},
\text{ and }
\phi(\xi_2)=\xi_1\otimes a_{12}+\xi_2\otimes a_{22},
\]
where $a_{ij}\in D,i,j=1,2.$

Let $A=(a_{ij})_{2\times 2}$. Since $\phi$ has an inverse
in $\mathbf{GrAut}(\mathcal{K}_N)(\mathcal{D})$, the matrix
$A$ is necessarily invertible. Now
\[
\phi(\xi_1\xi_2)=-\phi(\xi_1)_{(1)}\phi(\xi_2)=\xi_1\xi_2\otimes(a_{11}a_{22}-a_{21}a_{12})=\xi_1\xi_2\otimes\det(A).
\]
Choose $c,r\in D$ such that $\phi(1)=1\otimes c+\xi_1\xi_2\otimes r$. From $\phi(1)_{(1)}\phi(\xi_1\xi_2)=-\phi(\xi_1\xi_2)$ we deduce that $c\cdot\det(A)=\det(A)$. Since $A$ is invertible, $\det(A)$ is an unit in $D$ and therefore $c=1$.

Since $\phi(\xi_j)_{(0)}\phi(\xi_j)=-\frac{1}{2}\phi(1)$, we have
\[
a_{1j}^2+a_{2j}^2=1, \quad\text{ and }\quad r=2\big(\delta(a_{1j})a_{2j}-a_{1j}\delta(a_{2j})\big),\quad j=1,2,
\]
while $\phi(\xi_1)_{(0)}\phi(\xi_2)=-\frac{1}{2}\widehat{\partial}\phi(\xi_1\xi_2)$
implies that $a_{11}a_{12}+a_{21}a_{22}=0$. Thus\break
$A=(a_{ij})\in\Oth_2(D)$ and
\begin{align*}
r&=\big(\delta(a_{11})a_{21}-a_{11}\delta(a_{21})\big)+\big(\delta(a_{12})a_{22}-a_{12}\delta(a_{22})\big)\\
&=\big(\delta(a_{11})a_{21}-a_{11}\delta(a_{21})\big)+\big(\delta(a_{12})a_{22}-a_{12}\delta(a_{22})\big)+\delta(a_{11}a_{21}+a_{12}a_{22})\\
&=2\big(\delta(a_{11})a_{21}+\delta(a_{12})a_{22}\big),
\end{align*}
i.e., $\begin{pmatrix}0&r\\-r&0\end{pmatrix}=2\delta(A)A$. It follows $\phi(f)=\phi_A(f)$ for all $f\in\Lambda(2)$. Hence,
$\phi=\phi_A.$\vskip12pt

\textbf{Case $N=3$:} By reasoning as above we may assume that
\begin{equation}
\phi(\xi_j)=\xi_1\otimes a_{1j}+\xi_2\otimes a_{2j}+\xi_3\otimes a_{3j}+\xi_1\xi_2\xi_3\otimes s_j,\label{eqN31}
\end{equation}
where $a_{ij}, s_j\in D, i,j=1,2,3$, and that the matrix  $A=(a_{ij})$ is invertible.
For $i\neq j$, we have
\begin{align}
\phi(\xi_i\xi_j)&=-\phi(\xi_i)_{(1)}\phi(\xi_j)\label{eqN32}\\
&=\xi_1\xi_2\otimes(a_{1i}a_{2j}-a_{2i}a_{1j})+\xi_2\xi_3\otimes(a_{2i}a_{3j}-a_{3i}a_{2j})\nonumber\\
&\quad +\xi_3\xi_2\otimes(a_{3i}a_{2j}-a_{2i}a_{3j})\nonumber\\
&=\epsilon_{ijk}(\xi_1\xi_2\otimes A_{3k}+\xi_2\xi_3\otimes A_{1k}+\xi_3\xi_1\otimes A_{2k}),\nonumber
\end{align}
where $A_{ij}$ is the cofactor of $a_{ij}$ in $A$. While,
\begin{align}
\phi(\xi_1\xi_2\xi_3)&=-2\phi(\xi_1\xi_2)_{(1)}\phi(\xi_3)\label{eqN33}\\
&=\xi_1\xi_2\xi_3\otimes(A_{13}a_{13}+A_{23}a_{23}+A_{33}a_{33})\nonumber\\
&=\xi_1\xi_2\xi_3\otimes\det(A).\nonumber
\end{align}

Write $\phi(1)=1\otimes c+\xi_1\xi_2\otimes
r_3+\xi_2\xi_3\otimes r_2+\xi_3\xi_1\otimes r_1$, $c,r_j\in
D,j=\break 1,2,3$. Then
$\phi(1)_{(1)}\phi(\xi_1\xi_2\xi_3)=-\frac{1}{2}\phi(\xi_1\xi_2\xi_3)$
yields $c\cdot\det(A)=\det(A)$. Thus $c=1.$ We will show
that $A\in\Oth_3(D)$. First,
$\phi(\xi_j)_{(0)}\phi(\xi_j)=-\frac{1}{2}\phi(1), j=\break 1,2,3$ yields
\[
a_{1j}^2+a_{2j}^2+a_{3j}^2=1,\quad j=1,2,3,
\]
and for $i\neq j$, $\phi(\xi_i)_{(0)}\phi(\xi_j)=-\frac{1}{2}\widehat{\partial}\phi(\xi_i\xi_j)$ implies
\[
a_{1i}a_{1j}+a_{2i}a_{2j}+a_{3i}a_{3j}=0.
\]
Thus, $A=(a_{ij})\in\Oth_3(D)$.

Finally, we consider $r_j,s_j,j=1,2,3$. Since $\phi(1)_{(1)}\phi(\xi_j)=-\frac{3}{2}\phi(\xi_j),\break j=1,2,3$,
\[
\frac{1}{2}\epsilon_{lmn}(a_{mj}r_n-a_{nj}r_m)=\delta(a_{lj}),\quad\text{and}\quad r_1a_{1j}+r_2a_{2j}+r_3a_{3j}=s_j,
\]
for $l,j=1,2,3$. Writing the first equation in matrix form, we obtain
\begin{align}
&\frac{1}{2}\begin{pmatrix}0&r_3&-r_2\\-r_3&0&r_1\\r_2&-r_1&0\end{pmatrix}A=\delta(A)\label{eqN34}\\
&\quad \Longrightarrow\begin{pmatrix}0&r_3&-r_2\\-r_3&0&r_1\\r_2&-r_1&0\end{pmatrix}
=2\delta(A)A^\mathrm{T},\nonumber
\end{align}
because $A\in\Oth_3(D)$. A direct computation shows that
\begin{align*}
2A^\mathrm{T}\delta(A)&=A^\mathrm{T}\begin{pmatrix}0&r_3&-r_2\\-r_3&0&r_1\\r_2&-r_1&0\end{pmatrix}A \\
&= \begin{pmatrix}0&\displaystyle\sum\limits_{l=1}^3r_lA_{l3}&-\displaystyle\sum\limits_{l=1}^3r_lA_{l2}\\
-\displaystyle\sum\limits_{l=1}^3r_lA_{l3}&0&\displaystyle\sum\limits_{l=1}^3r_lA_{l1}\\
\displaystyle\sum\limits_{l=1}^3r_lA_{l2}&-\displaystyle\sum\limits_{l=1}^3r_lA_{l1}&0
\end{pmatrix}.
\end{align*}
Since $A\in\Oth_3(D)$, $A_{ij}=\det(A) a_{ij}, i,j=1,2,3$. So
\[
\sum_{l=1}^3r_lA_{lj}=\det(A)\sum_{l=1}^3r_la_{lj}=\det(A) s_j.
\]
Hence,
\begin{equation}
\begin{pmatrix}0&s_3&-s_2\\-s_3&0&s_1\\s_2&-s_1&0\end{pmatrix}=2\det(A) A^\mathrm{T}\delta(A).\label{eqN35}
\end{equation}
Summarizing $(\ref{eqN31})$--$(\ref{eqN35})$, we obtain $\phi(f)=\phi_A(f)$, for all $f\in\Lambda(3)$. Hence,
$\phi=\phi_A.$
\end{proof}

\begin{corollary}
$\mathbf{GrAut}(\mathcal{K}_N), N=1,2,3$, is representable by a smooth\break affine $\C$-scheme of finite type.\qed
\end{corollary}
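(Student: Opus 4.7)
The plan is to deduce this directly from Theorem 3.1, combined with the classical fact that the orthogonal group is a smooth affine $\C$-group scheme. By the theorem, we have a natural isomorphism of group functors $\iota : \Oth_N \xrightarrow{\sim} \mathbf{GrAut}(\mathcal{K}_N)$ on the category of complex differential rings. The crucial observation is that although the formulas describing $\phi_A$ involve the derivation (through $\delta(A)$), the parameter set $\Oth_N(D) = \{A \in \mathrm{GL}_N(D) : A^{\mathrm{T}} A = I_N\}$ depends only on the underlying commutative $\C$-algebra $D$. Thus the functor $\mathbf{GrAut}(\mathcal{K}_N)$ factors through the forgetful functor from complex differential rings to $\C$-algebras.

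I would then take for $\mathfrak{X}$ the classical orthogonal group scheme
\[
\mathfrak{X} \;=\; \mathrm{Spec}(B_N), \qquad B_N \;=\; \C[a_{ij} : 1 \leq i,j \leq N]\,\big/\,\mathfrak{I},
\]
where $\mathfrak{I}$ is the ideal generated by the $\binom{N+1}{2}$ entries on and above the diagonal of $A^{\mathrm{T}}A - I_N$. This $\mathfrak{X}$ is manifestly affine and of finite type over $\C$. Combining Yoneda (as recalled in Remark \ref{grpfunrmk}) with Theorem \ref{auto-thm}, for every complex differential extension $\mathcal{E} = (E,\delta_{\mathcal{E}})$ of $(\C,0)$ one gets
\[
\mathrm{Hom}_{\C\text{-sch}}(\mathrm{Spec}(E),\mathfrak{X}) \;\simeq\; \mathrm{Hom}_{\C\text{-alg}}(B_N, E) \;=\; \Oth_N(E) \;\xrightarrow[\sim]{\iota_{\mathcal{E}}}\; \mathbf{GrAut}(\mathcal{K}_N)(\mathcal{E}),
\]
and the identifications are functorial in $\mathcal{E}$ by the functoriality clause in Theorem \ref{auto-thm}.

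It remains to verify smoothness of $\mathfrak{X}$. Since we work over $\C$ (so in particular $2$ is invertible), this is a well-known fact: the Jacobian of the defining equations $A^{\mathrm{T}} A - I_N = 0$ has maximal rank $\binom{N+1}{2}$ at $A = I_N$, and one transports this computation to an arbitrary point via translation by the group law, obtaining a smooth affine group scheme of relative dimension $\binom{N}{2}$. I anticipate no genuine obstacle here: once Theorem \ref{auto-thm} is in hand, the corollary amounts to recognizing $\mathbf{GrAut}(\mathcal{K}_N)$ as the functor of points of the classical orthogonal scheme, and quoting its standard geometric properties.
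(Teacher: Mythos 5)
Your proposal is correct and is exactly the argument the paper intends: the corollary is stated with an immediate \qed because, by Theorem~\ref{auto-thm}, $\mathbf{GrAut}(\mathcal{K}_N)\simeq\Oth_N$ functorially, and $\Oth_N$ is the classical smooth affine $\C$-group scheme of finite type (smoothness being standard in characteristic $0$, e.g.\ via your Jacobian-plus-translation computation or Cartier's theorem). Your added observation that the point set $\Oth_N(D)$ depends only on the underlying $\C$-algebra $D$, even though the isomorphism $\iota_{\mathcal{D}}$ itself involves $\delta$, is precisely the right way to make the representability claim of Remark~\ref{grpfunrmk} explicit.
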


Recall that for the classification of twisted loop Lie
conformal superalgebras based on $\mathcal{K}_N$ the
crucial group to compute is
$\mathbf{Aut}(\mathcal{K}_N)(\widehat{\mathcal{S}}).$ The
next theorem shows that $\mathbf{Aut}(\mathcal{K}_N)$ and
$\mathbf{O}_N$ coincide when evaluated at complex
differential rings that are integral domains. In particular
for $\widehat{S}.$
\begin{theorem} \label{thm-auto-loop}
Let $\mathcal{D}=(D,\delta)$ be a complex differential ring such that $D$ is an integral domain. Then for $N=1,2,3$ the natural inclusion
\[
\mathbf{GrAut}(\mathcal{K}_N)(\mathcal{D})  \subseteq \mathbf{Aut}(\mathcal{K}_N)(\mathcal{D})
\]
is an equality.
\end{theorem}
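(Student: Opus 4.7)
Take $\phi \in \mathbf{Aut}(\mathcal{K}_N)(\mathcal{D})$ and write $\phi(\xi_j) = \sum_{k \geq 0}\partial^k \otimes u_{j,k}$ with $u_{j,k} \in \Lambda(N)_{\bar{1}} \otimes_\C D$. Setting $M := \max_j \max\{k : u_{j,k} \neq 0\}$, the goal is to derive a contradiction from the assumption $M \geq 1$. Once this is done, the remaining images (notably $\phi(1) = -2\phi(\xi_j)_{(0)}\phi(\xi_j)$, and similar formulas for $\phi(\xi_i\xi_j)$ etc.) inherit $\partial$-degree zero by a parallel degree count, placing $\phi$ inside $\mathbf{GrAut}(\mathcal{K}_N)(\mathcal{D})$.

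The central computational tool is the closed formula
\[
(\partial^a f)_{(n)}(\partial^b g) = (-1)^a\, n!\,\Bigl[\binom{b}{m}\partial^{b-m}(f_{(0)}g) + \binom{b}{m-1}\partial^{b-m+1}(f_{(1)}g)\Bigr],\quad m = n-a,
\]
obtained by iterating (CS1), combined with its $\mathcal{D}$-linear extension $\sum_{j'\geq 0}(\partial^a f)_{(n+j')}(\partial^b g)\otimes\delta^{(j')}(r)s$ from (CS3). Applying $\phi$ to $(\xi_i)_{(n)}\xi_j = 0$ (valid for $n \geq 2$) at $n = 2M$, a degree count shows that the only tuple $(a,b,j')$ contributing to the scalar (degree-$0$, $\xi$-free) and $\partial\otimes\xi_\alpha\xi_\beta$ (degree-$1$) projections is $(M,M,0)$, yielding the Gramian condition $C_M^\mathrm{T} C_M = 0$ and the cofactor condition $\det C_M = 0$, where $C_M = (c_{\alpha,j,M})$ with $u_{j,M} = \sum_\alpha \xi_\alpha \otimes c_{\alpha,j,M}$. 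For $N=3$, an additional $\xi_1\xi_2\xi_3 \otimes s_{j,M}$-contribution must be tracked, yielding the extra relations $s_{j,M}\,c_{\alpha,j,M} = 0$ arising from the mixed products $(\partial^M\xi_\alpha)_{(2M)}(\partial^M\xi_1\xi_2\xi_3)$.

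For $N=1$, the single equation $c_{1,1,M}^2 = 0$ together with $D$ being an integral domain immediately forces $c_{1,1,M} = 0$, contradicting $M \geq 1$; this settles $N=1$. For $N=2, 3$ the Gramian admits non-trivial isotropic solutions: passing to the fraction field $K$ of $D$ and using $\mathbf{i}\in\C\subseteq K$, the factorization $c_1^2 + c_2^2 = (c_1 + \mathbf{i}\,c_2)(c_1 - \mathbf{i}\,c_2)$ combined with $\det C_M = 0$ forces each non-zero $\xi$-column of $C_M$ to be a $K$-multiple of a common null element such as $\mathbf{i}\xi_1+\xi_2$ (and, for $N=3$, one must separately handle the sub-case of columns of pure $\xi_1\xi_2\xi_3$-type). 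To eliminate this residual isotropy the plan is to cascade through successive $\partial$-degrees using (a) the $\xi_\alpha\xi_\beta$-coefficient at $\partial$-degree $0$ of $\phi(\xi_i)_{(2M)}\phi(\xi_j)=0$, which mixes a Wronskian of the $M$-level coefficients with cross-terms involving the $(M{-}1)$-level ones (from the $(M,M,1)$, $(M,M{-}1,0)$, $(M{-}1,M,0)$ tuples), and (b) the eigenvalue identity $\phi(1)_{(1)}\phi(\xi_j) = -\tfrac{3}{2}\phi(\xi_j)$, forcing the top symbol of $\phi(\xi_j)$ into the $-3/2$-eigenspace of $\phi(1)_{(1)}$; cross-referencing with the null-direction structure and using the integral-domain property ``$rs=0 \Rightarrow r=0$ or $s=0$'' at each step collapses $C_M$ to $0$.

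The main obstacle is precisely this final cascade step for $N = 2, 3$: the iterative elimination of isotropic residues via cross-level constraints demands careful bookkeeping of how $M$- and $(M{-}1)$-level coefficients interact, and the integral-domain hypothesis must be invoked at essentially every stage. The $N=3$ case is structurally parallel to $N=2$ but notationally heavier (additional $\xi_1\xi_2\xi_3$-contributions and $3\times 3$ cofactor identities). Once $M=0$ is secured, Theorem~\ref{auto-thm} identifies $\phi$ with some $\phi_A$, $A\in\Oth_N(D)$, completing the equality.
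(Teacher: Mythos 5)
Your top-level extraction is correct as far as it goes: from $\phi(\xi_i)_{(2M)}\phi(\xi_j)=0$ one does get $C_M^{\mathrm{T}}C_M=0$ together with the vanishing of the $2\times 2$ minors of $C_M$, and for $N=1$ this is essentially the paper's argument (the single square $c_{1,1,M}^2=0$ dies in an integral domain). But for $N=2,3$ your proof has a genuine gap exactly where you flag the ``main obstacle'': over $\C$-algebras the form $\sum_\alpha c_\alpha^2$ is isotropic, so rank-one matrices with all columns proportional to, say, $(1,\mathbf{i})^{\mathrm{T}}$ satisfy every one of your top-level constraints, and the ``cascade'' through the $(M,M,1)$, $(M,M-1,0)$, $(M-1,M,0)$ tuples that is supposed to kill them is never carried out --- it is only asserted that careful bookkeeping plus the domain hypothesis ``collapses $C_M$ to $0$''. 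This is not a routine verification: those cross-level equations mix Wronskian-type expressions $\delta(c)c'-c\,\delta(c')$ of top coefficients with unknown lower-level data, and it is not evident they suffice. Worse, your step (b) is circular as stated: the identity $\phi(1)_{(1)}\phi(\xi_j)=-\frac{3}{2}\phi(\xi_j)$ only constrains the top symbol of $\phi(\xi_j)$ once the shape of $\phi(1)$ is controlled, but $\phi(1)$ is an even element whose $\partial$-degree is itself one of the things to be bounded at that stage. You also make almost no use of the invertibility of $\phi$, which the paper needs repeatedly (units $\det(A)$, $c\neq 0$, and an invertible matrix $(b_{l'l})$).

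The paper sidesteps the isotropy problem entirely by attacking the \emph{even} part first, where anisotropic directions exist. For $N=2$ it writes $\phi(\xi_1\xi_2)$ with a component along $\C[\partial]\otimes 1\otimes D$ and uses $[\phi(\xi_1\xi_2)_\lambda\phi(\xi_1\xi_2)]=0$: projecting onto that component (where $1_{(0)}1=-\partial\otimes 1$, $1_{(1)}1=-2\otimes 1$) produces an honest square $s_M^2=0$, which the domain hypothesis kills; then the eigen-identities $[\phi(1)_\lambda\phi(\xi_1\xi_2)]=-(\widehat{\partial}+\lambda)\phi(\xi_1\xi_2)$, $[\phi(\xi_j)_\lambda\phi(\xi_1\xi_2)]=\pm\frac{1}{2}\phi(\xi_{j'})$ and $[\phi(1)_\lambda\phi(\xi_i)]=-(\widehat{\partial}+\frac{3}{2}\lambda)\phi(\xi_i)$ successively force $\partial$-degree zero on $\phi(\xi_1\xi_2)$, $\phi(\xi_j)$ and $\phi(1)$, in that order --- note the order resolves the circularity you run into. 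For $N=3$ the same square argument puts $\phi(\xi_i\xi_j)$ inside $\mathcal{B}\otimes_\C\mathcal{D}$ with $\mathcal{B}\simeq\mathrm{Curr}(\mathfrak{so}_3(\C))$, and the known automorphism structure of current algebras over a simple Lie algebra (\cite{KLP}, Corollary~3.17) gives degree preservation there; the invertibility of the resulting matrix $(b_{l'l})\in\mathbf{GL}_3(D)$ is then what eliminates the higher-degree two-form terms of $\phi(1)$, after which $\phi(\xi_1\xi_2\xi_3)$ and $\phi(\xi_j)$ are pinned down by bracketing against the controlled even elements. To repair your proposal you would either have to complete the cascade in full detail (including justifying that your chosen finite set of identities plus invertibility suffices), or, more efficiently, reroute through the even part as the paper does.
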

\begin{proof}
We have to show that every $\phi\in\mathbf{Aut}(\mathcal{K}_N)(\mathcal{D})$ satisfies
\[
\phi\left(\C\otimes_\C\Lambda(N)\otimes_\C D\right)\subseteq\C\otimes_\C\Lambda(N)\otimes_\C D.
\]

\textbf{Case $N=1$:} Since $(\mathcal{K}_1\otimes_\C \mathcal{D})_{\odd}=\C[\partial]\otimes_\C\xi_1\otimes_\C D$ and $\phi$ preserve the $\Z/2\Z$-grading of $\mathcal{K}_1\otimes_\C \mathcal{D}$, we may assume that
\[
\phi(\xi_1)=\sum\limits_{n=0}^{M}\widehat{\partial}^n(\xi_1\otimes s_n),
\]
where $s_n\in D,n=0,\ldots,M$ and $s_M\neq0$. Then
\begin{align*}
[\phi(\xi_1)_\lambda\phi(\xi_1)]&=-\frac{1}{2}\sum\limits_{m=0}^{M}\sum\limits_{n=0}^{M}(-\lambda)^m(\widehat{\partial}+\lambda)^n(1\otimes s_ms_n) \,\, \text{\rm while }\\
\phi([{\xi_1}_\lambda\xi_1])&=\phi\left(\left(-\frac{1}{2}\right)1\right)=-\frac{1}{2}\phi(1).
\end{align*}
The leading term (i.e., the term with highest degree with
respect to $\lambda$) in the right-hand sides of above two
equations are $\frac{1}{2}(-1)^{M+1}\lambda^{2M}(1\otimes
s_M^2)$ and $-\frac{1}{2}\phi(1)$, respectively. Since $D$
is an integral domain, $s_M\neq 0$ implies $s_M^2\neq0$.
Thus $[\phi(\xi_1)_\lambda\phi(\xi_1)]=\phi([{\xi_1}_\lambda\xi_1])$
yields $M=0$, i.e.,
\[
\phi(\xi_1)=\xi_1\otimes s_0,\quad 0\neq s_0\in D.
\]
and $\phi(1)=-2[\phi(\xi_1)_\lambda\phi(\xi_1)]=1\otimes s_0^2$. Hence,
\[
\phi\left(\C\otimes\Lambda(1)\otimes D\right)\subseteq\C\otimes\Lambda(1)\otimes D.
\]

\textbf{Case $N=2$:} Write $\phi(\xi_1\xi_2)=\sum\nolimits_{m=0}^M\widehat{\partial}^m(1\otimes s_m)+x$ with $x\in \C[\partial]\otimes\xi_1\xi_2\otimes D$, $s_m\in D, m=0,\ldots,M, s_M\neq0$. Then
\begin{align*}
0&=[\phi(\xi_1\xi_2)_\lambda\phi(\xi_1\xi_2)]\\
&=\sum\limits_{m,n=0}^M(-\lambda)^m(\widehat{\partial}+\lambda)^n(-\partial1\otimes s_ms_n-2\otimes\delta(s_m)s_n-\lambda2\otimes s_ms_n)\\
&\quad+\left[\sum\limits_{m=0}^M\widehat{\partial}^m(1\otimes s_m)_\lambda x\right]+\left[x_\lambda\sum\limits_{m=0}^M\widehat{\partial}^m(1\otimes s_m)\right]+[x_\lambda x].
\end{align*}
Observing that all terms in the last row of the above
equation are contained in $\C[\partial]\otimes\xi_1\xi_2\otimes D$ and
$(\mathcal{K}_2\otimes_\C
D)_{\even}=(\C[\partial]\otimes1\otimes D)\oplus
(\C[\partial]\otimes\xi_1\xi_2\otimes D)$, we obtain
\[
0=\sum\limits_{m,n=0}^M(-\lambda)^m(\widehat{\partial}+\lambda)^n(\partial1\otimes s_ms_n+2\otimes\delta(s_m)s_n+\lambda2\otimes s_ms_n).
\]
By comparing the coefficients of $\lambda$ it follows that
$s_M^2=0$, hence that $s_M=0$ since $D$ is an integral
domain. This contradicts our assumption that $s_M\neq 0$.
Thus
\[
\phi(\xi_1\xi_2)=x=\sum\limits_{m=0}^{M'}\widehat{\partial}^m(\xi_1\xi_2\otimes c_m),
\]
where $c_m\in D, m=1,\ldots,M', c_{M'}\neq0.$

Similarly, we may assume that
\[
\phi(1)=\sum\limits_{m=0}^{\widetilde{M}'}\widehat{\partial}^m(1\otimes r_m')+\sum\limits_{m=0}^{\widetilde{M}}\widehat{\partial}^m(\xi_1\xi_2\otimes r_m),
\]
where $r'_{m'},r_m\in\widehat{\mathcal{S}}, m'=0,\ldots,\widetilde{M}',m=0,\ldots,\widetilde{M}$.
Then $[\phi(1)_\lambda\phi(\xi_1\xi_2)]=\break -(\widehat{\partial}+\lambda)\phi(\xi_1\xi_2)$ yields $r_{\widetilde{M'}}'c_{M'}=0$ if $\widetilde{M'}+M'>0$. Since $c_{M'}\neq0$ and $D$ is an integral domain, $r_{\widetilde{M'}}'=0$ if $\widetilde{M'}+M'>0$. Thus $\widetilde{M'}=M'=0$, i.e.,
\begin{align*}
\phi(\xi_1\xi_2)&=\xi_1\xi_2\otimes c,\\
\phi(1)&=1\otimes r'+\sum\limits_{m=0}^{\widetilde{M}}\widehat{\partial}^m(\xi_1\xi_2\otimes r_m),
\end{align*}
where $0\neq c\in D, 0\neq r'\in D$ and $r_m\in D, m=0,\ldots,\widetilde{M}$.

Now we consider the odd part $(\mathcal{K}_2\otimes_\C\mathcal{D})_{\odd}=\C[\partial]\otimes(\C\xi_1\oplus \C\xi_2)\otimes D$. Write
\[
\phi(\xi_j)=\sum\limits_{m=0}^{M_{1j}}\widehat{\partial}^m(\xi_1\otimes a_{1j,m})+\sum\limits_{n=0}^{M_{2j}}\widehat{\partial}^n(\xi_2\otimes a_{2j,n}),
\]
where $a_{ij,m}\in D, i,j=1,2$, and $ m=0,\ldots,M_{ij}$. Then it follows from\break $[\phi(\xi_1)_\lambda\phi(\xi_1\xi_2)]=-\frac{1}{2}\phi(\xi_2)$ and $[\phi(\xi_2)_\lambda\phi(\xi_1\xi_2)]=\frac{1}{2}\phi(\xi_1)$ that
\[
\phi(\xi_1)=\xi_1\otimes a_{11}+\xi_2\otimes a_{21},\text{ and }
\phi(\xi_2)=\xi_1\otimes a_{12}+\xi_2\otimes a_{22},
\]
where $a_{ij}\in D,i,j=1,2.$

Next, we consider $\phi(1)$. We deduce from $[\phi(1)_\lambda\phi(\xi_i)]=-(\widehat{\partial}+\frac{3}{2}\lambda)\phi(\xi_i),\break i=1,2$ that $\phi(1)=1\otimes r'+\xi_1\xi_2\otimes r_0$, where $r', r_0\in D$. It follows that:
\[
\phi\left(\C\otimes\Lambda(2)\otimes D\right)\subseteq\C\otimes\Lambda(2)\otimes D.
\]

\textbf{Case $N=3$:} Let $\mathcal{B}=\C[\partial]\otimes_\C(\C\xi_1\xi_2\oplus \C\xi_2\xi_3\oplus \C\xi_3\xi_1)$, then
\[
(\mathcal{K}_3\otimes_\C \mathcal{D})_\even=(\C[\partial]\otimes 1\otimes D)\oplus (\mathcal{B}\otimes_\C\mathcal{D}).
\]
We may assume
$\phi(\xi_i\xi_j)=\sum_{m=0}^M\widehat{\partial}^m(1\otimes
s_m)+x_{ij}, i\neq j$, where $s_m\in D$ and
$x_{ij}\in\mathcal{B}\otimes_\C\mathcal{D}$. Since
$\mathcal{B}\otimes_\C \mathcal{D}$ is an ideal of
$(\mathcal{K}_3\otimes_\C \mathcal{D})_\even$ and
$[\phi(\xi_i\xi_j)_\lambda\phi(\xi_i\xi_j)]=0$, we deduce
that $s_M^2=0$. Then $s_M=0$ because $D$ is an integral
domain. It follows that
$\phi(\xi_i\xi_j)=x_{ij}\in\mathcal{B}\otimes_\C
\mathcal{D}$. Hence, $\phi|_{\mathcal{B}\otimes_\C
\mathcal{D}}$ is an automorphism of the $\mathcal{D}$-Lie
conformal superalgebra $\mathcal{B}\otimes_\C \mathcal{D}$.
Since $\mathcal{B}\simeq\mathrm{Curr}(\mathfrak{so}_3(\C))$
and $\mathfrak{so}_3(\C)$ is a finite-dimensional simple
complex Lie algebra, by \cite{KLP} Corollary~3.17,
we~obtain
\begin{equation}
\phi(\xi_i\xi_j)=\epsilon_{ijl}(\xi_1\xi_2\otimes b_{3l}+\xi_2\xi_3\otimes b_{1l}+\xi_3\xi_1\otimes b_{2l}),\quad i\neq j,\label{eqN37}
\end{equation}
where $(b_{l'l})_{3\times 3}\in\mathbf{GL}_3(D)$. Next we consider $\phi(1)$.\vskip6pt

\textit{Claim:}
\begin{equation}
\phi(1)=1+\xi_1\xi_2\otimes r_3+\xi_2\xi_3\otimes r_1+\xi_3\xi_1\otimes r_2,\qquad r_1,r_2,r_3\in D.\label{eqN38}
\end{equation}

Indeed, we can write
$\phi(1)=\sum_{m=0}^M\widehat{\partial}^m(1\otimes
s_m)+x$ with $s_i\in D, i=0,\ldots,\break M, x\in
\mathcal{B}\otimes_{\C}{\mathcal{D}}$. We may assume
$s_M\neq 0$ because $\phi$ is an isomorphism. Then
\begin{align*}
[\phi(1)_\lambda\phi(1)]&=\!\sum_{m,n=0}^M\!(-\lambda)^m(\widehat{\partial}+\lambda)^{n}(-\partial1\otimes s_ms_{n}-2\otimes\delta(s_m)s_{n}-\lambda2\otimes s_ms_{n})\\
&\quad+\left[\sum\limits_{m=0}^M\widehat{\partial}^m(1\otimes s_m)_\lambda x\right]
+\left[x_\lambda\sum\limits_{n=0}^M\widehat{\partial}^n(1\otimes s_n)\right]+[x_\lambda x].
\end{align*}
Note that all terms in the second row of the above equation
are contained in $\C[\lambda]\otimes_{\C}\mathcal{B}\otimes_{\C}\mathcal{D}$.
If $M>0$, we deduce that $s_M^2=0$ by comparing the
coefficients of $\lambda^{2M+1}$ in
$[\phi(1)_\lambda\phi(1)]=-(\widehat{\partial}+2\lambda)\phi(1)$.
Since $D$ is an integral domain, $s_M=0$. This contradicts
$s_M\neq0$. Hence, $M=0$, i.e., $\phi(1)=1\otimes s_0+x$
with $x\in\mathcal{B}\otimes_{\C}\mathcal{D}$.
\begin{align*}
[\phi(1)_\lambda\phi(1)]&=-\partial1\otimes s_0^2-2\otimes\delta(s_0)s_0-\lambda2\otimes s_0^2\\
&\quad +[(1\otimes s_0)_\lambda x]+[x_\lambda(1\otimes s_0)]+[x_\lambda x],\\
-(\widehat{\partial}+2\lambda)\phi(1)&=-\partial 1\otimes s_0-1\otimes\delta(s_0)-\lambda2\otimes s_0-(\widehat{\partial}+2\lambda)x.
\end{align*}
It follows that $s_0^2=s_0$. Since $s_0\neq0$, $s_0=1$
because $D$ is an integral domain, i.e., $\phi(1)=1+x$ with
$x\in\mathcal{B}\otimes_{\C}\mathcal{D}$.

We further write $x=\sum\nolimits_{l=0}^N\widehat{\partial}^l(\xi_1\xi_2\otimes
r_{3l}+\xi_2\xi_3\otimes r_{1l}+\xi_3\xi_1\otimes r_{2l}),
r_{2l}\in D$. Observing that
$\phi(\xi_i\xi_j)=\epsilon_{ijk}(\xi_1\xi_2\otimes
b_{3k}+\xi_2\xi_3\otimes b_{1k}+\xi_3\xi_1\otimes b_{2k})$, we\break obtain
\begin{align*}
[\phi(1)_\lambda\phi(\xi_i\xi_j)]&=-\epsilon_{ijk}(\partial\xi_1\xi_2\otimes b_{3k}+\partial\xi_2\xi_3\otimes b_{1k}+\partial\xi_3\xi_1\otimes b_{2k})  \\
&\quad -\epsilon_{ijk}\lambda(\xi_1\xi_2\otimes b_{3k}+\xi_2\xi_3\otimes b_{1k}+\xi_3\xi_1\otimes b_{2k})\\
&\quad +\epsilon_{ijk}\sum\limits_{l=0}^N(-\lambda)^l(\xi_1\xi_2\otimes(r_{1l}b_{2k}-r_{2l}b_{1k})\\
&\quad +\xi_2\xi_3\otimes(r_{2l}b_{3k}-r_{3l}b_{2k}))\\
&\quad +\epsilon_{ijk}\sum\limits_{l=0}^N(-\lambda)^l\xi_3\xi_1\otimes(r_{3l}b_{1k}-r_{1l}b_{3k}).
\end{align*}

\noindent Note that
\begin{align*}
-(\widehat{\partial}+\lambda)\phi(\xi_i\xi_j)&=-\epsilon_{ijk}(\partial\xi_1\xi_2\otimes b_{3k}+\partial\xi_2\xi_3\otimes b_{1k}+\partial\xi_3\xi_1\otimes b_{2k})\\
&\quad -\epsilon_{ijk}(\xi_1\xi_2\otimes \delta(b_{3k})+\xi_2\xi_3\otimes \delta(b_{1k})+\xi_3\xi_1\otimes \delta(b_{2k}))\\
&\quad -\epsilon_{ijk}\lambda(\xi_1\xi_2\otimes b_{3k}+\xi_2\xi_3\otimes b_{1k}+\xi_3\xi_1\otimes b_{2k}).
\end{align*}
Then $[\phi(1)_\lambda\phi(\xi_i\xi_j)]=-(\widehat{\partial}+\lambda)\phi(\xi_i\xi_j), i,j=1,2,3, i\neq j$ imply that
\[
r_{il}b_{jk}-r_{jl}b_{ik}=0,
\]
for all $i,j,k=1,2,3, i\neq j, l\geqslant1$. In matrix form, these are equivalent to
\[
\begin{pmatrix}0&-r_{3l}&r_{2l}\\ r_{3l}&0&-r_{1l}\\ -r_{2l}&r_{1l}&0\end{pmatrix}
\begin{pmatrix}b_{11}&b_{12}&b_{13}\\b_{21}&b_{22}&b_{23}\\b_{31}&b_{32}&b_{33}\end{pmatrix}=0, \quad\forall l\geqslant1.
\]
Hence, $r_{1l}=r_{2l}=r_{3l}=0$ for all $l\geqslant1$ because $(b_{ij})_{3\times3}\in\mathbf{GL}_3(D)$, i.e.,
\[
\phi(1)=1+\xi_1\xi_2\otimes r_{30}+\xi_2\xi_3\otimes r_{10}+\xi_3\xi_1\otimes r_{20},
\]
where $r_{10},r_{20},r_{30}\in D$. This completes the proof of the claim.

Next, we consider the odd part $(K_3\otimes_\C
\mathcal{D})_{\odd}$. First
$[\phi(\xi_i\xi_j)_\lambda\phi(\xi_1\xi_2\xi_3)]=0$ for all
$i\neq j$ yield that
$\phi(\xi_1\xi_2\xi_3)=\sum\nolimits_{m=0}^M\widehat{\partial}^m(\xi_1\xi_2\xi_3\otimes
c_m), c_m\in D$. Considering
$[\phi(1)_\lambda\phi(\xi_1\xi_2\xi_3)]=-(\widehat{\partial}+\frac{1}{2}\lambda)\phi(\xi_1\xi_2\xi_3)$,
we deduce that
\begin{equation}
\phi(\xi_1\xi_2\xi_3)=\xi_1\xi_2\xi_3\otimes c,\quad 0\neq c\in D.\label{eqN39}
\end{equation}
A similar argument applies to $\phi(\xi_j)$. From
$[\phi(\xi_j)_\lambda\phi(\xi_1\xi_2\xi_3)]=\epsilon_{jmn}\phi(\xi_m\xi_n)$ and
$[\phi(1)_\lambda\phi(\xi_j)]=-(\widehat{\partial}+\frac{3}{2}\lambda)\phi(\xi_j)$, we obtain
\begin{equation}
\phi(\xi_j)=\xi_1\otimes a_{1j}+\xi_2\otimes a_{2j}+\xi_3\otimes a_{3j}+\xi_1\xi_2\xi_3\otimes s_j,\label{eqN310}
\end{equation}
where $a_{ij}, s_j\in D$. Summarizing (\ref{eqN37}) to (\ref{eqN310}), we obtain
\[
\phi\left(\C\otimes\Lambda(3)\otimes D\right)\subseteq\left(\C\otimes\Lambda(3)\otimes D\right).
\]
This completes the proof.
\end{proof}

\begin{remark}
The integral assumption on $D$ is not superflous. Consider
the complex differential ring $\mathcal{D}=(D,\delta)$,
where $D=\C\oplus\C\tau, \tau^2=0$ (the algebra of dual
numbers) and $\delta=0.$ For the $\mathcal{D}$-Lie
conformal superalgebra
$\mathcal{K}_1\otimes_\C\mathcal{D}=\C[\partial]\otimes_\C(\C\oplus\C\xi_1)\otimes_\C
D$ it is easy to check that
\begin{align*}
    \phi(\partial^\ell\otimes1\otimes s)&=\partial^\ell\otimes1\otimes s+\partial^{\ell+1}\otimes1\otimes\tau s, \ell\geqslant0, s\in D,\\
\phi(\partial^\ell\otimes\xi_1\otimes s)&=\partial^\ell\otimes\xi_1\otimes s+\partial^{\ell+1}\otimes\xi_1\otimes\tau{s}, \ell\geqslant0, s\in D,
\end{align*}
define an automorphism of the $\mathcal{D}$-Lie conformal
superalgebra of $\mathcal{K}_1\otimes_\C\mathcal{D}$. But
$\phi\not\in\mathbf{GrAut}(\mathcal{K}_1)(\mathcal{D})$.
\end{remark}

\section{Forms of the $N=1,2,3$ Lie conformal superalgebras}

In this section, we classify the twisted loop Lie conformal
superalgebras based on the complex Lie conformal
superalgebra $\mathcal{K}_N, N=1,2,3$. As previously
mentioned, the classification can be completed in two
steps. The first step is to classify the
$\widehat{\mathcal{S}}/\mathcal{R}$-forms of
$\mathcal{K}_N\otimes_\C \mathcal{R}$, and  then look at
the passage from isomorphic classes of the
$\mathcal{R}$-Lie conformal superalgebras to isomorphic
classes of the complex Lie conformal superalgebras. Recall
that $\widehat{S}=\C[t^q, q\in\Q]$ is the algebraic simply
connected cover of $R = \C[t^{\pm 1}]$, and that the
algebraic fundamental group $\pi_1(R)$ of
$\mathrm{Spec}(R)$ at the geometric point
$\mathrm{Spec}\big(\overline{\C(t)}\big)$ can be identified
with $\widehat{\Z}=\lim\limits_{\longleftarrow}\Z/m\Z$ via
our canonical choice of compatible roots of unity $\zeta_m
= e^{\frac{\mathbf{i}2\pi}{m}}.$ The explicit continuous
action of the profinite group  $\widehat{\Z}$  on
$\widehat{S}$ is given by
$^{\overline{1}}t^{p/m}=\zeta_m^pt^{p/m}$, where
$\overline{1}$ is the image of $1$ in $\widehat{\Z}$ under
the canonical homomorphism
$\Z\rightarrow\widehat{\Z}$.\footnote{The continuous action
of $\widehat{\Z}$ is determined by the action of $\bar{1}$
because $\Z$ is dense in $\widehat{\Z}$.} We thus have
natural {\it continuous} actions of $\pi_1(R)$ on
$\mathbf{Aut}(\mathcal{K}_N)(\widehat{\mathcal{S}})$ and
$\mathbf{O}_N(\widehat{S}), N=1,2,3$. Note that the
isomorphism $\iota:\mathbf{O}_N(\widehat{S})\rightarrow\mathbf{Aut}(\mathcal{K}_N)(\widehat{\mathcal{S}})$
is  $\pi_1(R)$-equivariant.

\begin{theorem}\label{RSform}
Let $N=1,2,3.$ There are exactly two
$\widehat{\mathcal{S}}/\mathcal{R}$-forms (up to
isomorphism of $\mathcal{R}$-Lie conformal superalgebras)
of $\mathcal{K}_N\otimes_{\C}\mathcal{R}.$ These are
$\mathcal{L}(\mathcal{K}_N,\mathrm{id})$ and
$\mathcal{L}(\mathcal{K}_N,\omega_N)$, where
$\omega_N:\mathcal{K}_N\rightarrow\mathcal{K}_N$ is the
automorphism of the complex Lie conformal superalgebra
$\mathcal{K}_N$ given by
\begin{align*}
&\omega_1:&&1\mapsto1, &&\xi_1\mapsto-\xi_1,\\
&\omega_2:&&1\mapsto1, &&\xi_1\mapsto-\xi_1,\\ &&&\xi_2\mapsto\xi_2,&&\xi_1\xi_2\mapsto-\xi_1\xi_2,\\
&\omega_3:&&1\mapsto1,&&\xi_j\mapsto-\xi_j,j=1,2,3,\\
&&&\xi_i\xi_j\mapsto\xi_i\xi_j,i\neq j, &&\xi_1\xi_2\xi_3\mapsto-\xi_1\xi_2\xi_3.
\end{align*}
\end{theorem}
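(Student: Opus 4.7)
The overall strategy is cohomological: interpret the forms as classes in a continuous non-abelian $H^1$, then reduce the computation to $\mu_2$ via the determinant. By the discussion preceding the theorem and by Theorems~\ref{auto-thm} and \ref{thm-auto-loop}, the set of isomorphism classes of $\widehat{\mathcal{S}}/\mathcal{R}$-forms of $\mathcal{K}_N\otimes_\C\mathcal{R}$ is in bijection with $H^1\bigl(\widehat{\Z},\mathbf{O}_N(\widehat{S})\bigr)$, where $\widehat{\Z}=\pi_1(R)$ acts continuously through its action on $\widehat{S}$ and the identification $\iota$ is $\widehat{\Z}$-equivariant by functoriality.

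I would then invoke the short exact sequence
\[
1\to\mathbf{SO}_N\to\mathbf{O}_N\xrightarrow{\det}\mu_2\to 1
\]
of $\C$-group schemes, evaluate at $\widehat{S}$, and exploit the resulting exact sequence of pointed cohomology sets to reduce the computation of $H^1(\widehat{\Z},\mathbf{O}_N(\widehat{S}))$ to those of $H^1(\widehat{\Z},\mathbf{SO}_N(\widehat{S}))$ and $H^1(\widehat{\Z},\mu_2)$. Since $\mu_2(\widehat{S})$ carries the trivial $\widehat{\Z}$-action, $H^1(\widehat{\Z},\mu_2)=\mathrm{Hom}_{\mathrm{cts}}(\widehat{\Z},\Z/2\Z)\cong\Z/2\Z$. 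The element $\omega_N\in\mathbf{O}_N(\C)$ satisfies $\det(\omega_N)=-1$, so the connecting map is trivial and the constant cocycle $\bar 1\mapsto\omega_N$ furnishes an explicit lift of the nontrivial class of $H^1(\widehat{\Z},\mu_2)$ to $H^1(\widehat{\Z},\mathbf{O}_N(\widehat{S}))$. Tracing through the cocycle--twisted loop dictionary identifies this lift with $\mathcal{L}(\mathcal{K}_N,\omega_N)$, and the trivial class with $\mathcal{L}(\mathcal{K}_N,\mathrm{id})$.

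The crux of the argument is therefore the vanishing $H^1(\widehat{\Z},\mathbf{SO}_N(\widehat{S}))=1$, which I would establish case by case. The case $N=1$ is trivial because $\mathbf{SO}_1=1$. For $N=2$, the isomorphism $\mathbf{SO}_2\cong\mathbf{G}_m$ reduces the problem to computing $H^1(\widehat{\Z},\widehat{S}^\times)$; using the exact sequence $1\to\C^\times\to\widehat{S}^\times\to\Q\to 0$ (with $\widehat{\Z}$ acting trivially on the quotient), the vanishing $H^1_{\mathrm{cts}}(\widehat{\Z},\Q)=0$, and the fact that the connecting map $\Q\to H^1(\widehat{\Z},\C^\times)\cong\Q/\Z$ is the canonical surjection (which one reads off directly from $^{\bar 1}t^{p/m}=\zeta_m^p t^{p/m}$), the vanishing follows.

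The main obstacle is the $N=3$ case, where $\mathbf{SO}_3$ is a nonabelian connected simple $\C$-group. I would reduce to a torus: after passing to the loop-cocycle description, any continuous $1$-cocycle $\widehat{\Z}\to\mathbf{SO}_3(\widehat{S})$ is cohomologous to one of the form $\bar 1\mapsto\sigma$ with $\sigma\in\mathbf{SO}_3(\C)$ of finite order; such a $\sigma$ is a rotation and therefore conjugate into a maximal torus $T\cong\mathbf{SO}_2$, whereupon the $N=2$ computation applied inside $T(\widehat{S})\subseteq\mathbf{SO}_3(\widehat{S})$ produces a trivializing cochain. Alternatively one can invoke the general result of Chernousov--Gille--Pianzola (cf.~\cite{P1,GP1}) that loop torsors of connected reductive $\C$-groups over $\mathcal{R}$ are trivial. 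Once the vanishing is in hand for all $N=1,2,3$ (and remains so after inner twisting, since $\mathbf{SO}_N$ is characteristic in $\mathbf{O}_N$), the fiber of $H^1(\mathbf{O}_N)\to H^1(\mu_2)$ over each class is a singleton, yielding exactly the two claimed forms.
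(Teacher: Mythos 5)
Your proposal follows the paper's skeleton through the first two-thirds: the reduction (via the finiteness condition and the equivariant isomorphism $\iota$ of Theorem~\ref{auto-thm}) to computing $H^1\big(\widehat{\Z},\Oth_N(\widehat{S})\big)$, the determinant sequence $1\to\mathbf{SO}_N\to\Oth_N\to\Z/2\Z\to1$, the identification $H^1(\widehat{\Z},\Z/2\Z)\simeq\Z/2\Z$, and the realization of the two classes by the constant cocycles $\bar1\mapsto\mathrm{id}$ and $\bar1\mapsto\omega_N$ are all exactly as in the paper. Where you diverge is the key vanishing step. The paper disposes of both fibers of $\psi$ at once by quoting general theorems: $H^1_{\mathrm{\acute{e}t}}(R,\mathfrak{G})=0$ for \emph{every} reductive group scheme $\mathfrak{G}$ over $R$ (\cite{P1}, Theorem~3.1(i)), combined with the isotriviality theorem of \cite{GP1,GP2} to identify \'etale with continuous cohomology. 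You instead compute case by case. Your $N=1$ and split $N=2$ computations are correct (the sequence $1\to\C^\times\to\widehat{S}^\times\to\Q\to0$, $H^1_{\mathrm{cts}}(\widehat{\Z},\Q)=0$, and the surjectivity of the connecting map $\Q\to\Q/\Z$ all check out). For $N=3$, however, note that your first step --- that every continuous cocycle $\widehat{\Z}\to\mathbf{SO}_3(\widehat{S})$ is cohomologous to a constant one with finite-order value in $\mathbf{SO}_3(\C)$ --- is precisely the loop-torsor theorem of \cite{CGP}, so this branch is not actually more elementary than the paper's citation; since you offer that citation as an alternative, the step is acceptable, but you should not present it as a computation.

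There is one genuine gap: the fiber of $\psi$ over the \emph{nontrivial} class. By the standard twisting formalism this fiber is measured by $H^1\big(\widehat{\Z},{}_{\beta}\mathbf{SO}_N(\widehat{S})\big)$, where the action is twisted by conjugation by $\omega_N$. Your parenthetical ``remains so after inner twisting, since $\mathbf{SO}_N$ is characteristic in $\Oth_N$'' conflates two things: being characteristic only guarantees that the twisted group ${}_{\beta}\mathbf{SO}_N$ is well defined as a form of $\mathbf{SO}_N$, not that its cohomology agrees with that of the split form. Moreover the twist is \emph{not} inner for $N=2$: conjugation by $\omega_2=\mathrm{diag}(-1,1)$ induces inversion on $\mathbf{SO}_2\cong\mathbf{G}_m$, an outer automorphism, and ${}_{\beta}\mathbf{SO}_2$ is the nonsplit norm-one torus of $\C[t^{\pm1/2}]/R$, so your computation of $H^1(\widehat{\Z},\widehat{S}^\times)$ with the natural action does not apply to it verbatim. (You are rescued at $N=3$ only by the accident that $\omega_3=-I$ is central, so there the twist is trivial; $N=1$ is vacuous.) The gap is fixable within your framework: rerun the $N=2$ unit computation for the twisted action, where $\bar1$ acts by $s\mapsto\big({}^{\bar1}s\big)^{-1}$ --- the same exact sequence, now with $\bar1$ acting by $-1$ on $\Q$ and by inversion on $\C^\times$, still gives $H^1=0$ because $\Q$ is uniquely divisible and $\C^\times$ is divisible --- or simply observe, as the paper does, that \cite{P1}, Theorem~3.1(i) applies to all reductive group schemes over $R$, nonsplit tori included, which handles ${}_{\beta}\mathbf{SO}_N$ uniformly.
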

\begin{proof}
By \cite{KLP} Theorem~2.16,  the
$\widehat{\mathcal{S}}/\mathcal{R}$-forms of
$\mathcal{K}_N\otimes_\C \mathcal{R}$ are parametrized by
the non-abelian \v{C}ech cohomology set
$H^1\big(\widehat{\mathcal{S}}/\mathcal{R},\mathbf{Aut}(\mathcal{K}_N)\big)$.

Since $\mathcal{K}_N\otimes_{\C}\mathcal{R}$ is spanned by
\[
\{\widehat{\partial}^\ell(rf)|r\in R, \ell\geqslant0, f=\xi_{i_1}\ldots\xi_{i_t}, 1\leqslant i_1<\cdots<i_t\leqslant N\},
\]
we see that $\mathcal{R}$-Lie conformal superalgebra
$\mathcal{K}_N\otimes_{\C}\mathcal{R}$ satisfies the
finiteness condition of Proposition~2.29 of \cite{KLP}, and
this allows us to identify the cohomology set
$H^1\big(\widehat{\mathcal{S}}/\mathcal{R},\mathbf{Aut}(\mathcal{K}_N)\big)$
with the ``usual'' non-abelian (continuous) cohomology set
$H^1\big(\pi_1(R),\mathbf{Aut}(\mathcal{K}_N)(\widehat{\mathcal{S}})\big).$
Our problem is thus reduced to computing
$H^1(\pi_1(R),\mathbf{Aut}(\mathcal{K}_N)(\widehat{\mathcal{S}}))$.

The loop algebras $\mathcal{L}(\mathcal{K}_N,\mathrm{id})$
and $\mathcal{L}(\mathcal{K}_N,\omega_N)$  correspond to
the classes $[\alpha]$ and $[\beta]$ in
$H^1\big(\pi_1(R),\mathbf{Aut}(\mathcal{K}_N)(\widehat{\mathcal{S}})\big)$
given by the (constant) cocycles
$\alpha,\beta:\pi_1(R)\rightarrow\mathbf{Aut}(\mathcal{K}_N)(\widehat{\mathcal{S}})$
defined by $1\mapsto \mathrm{id}$ and $ 1\mapsto\omega_N$,
respectively.

Since the isomorphism $\iota:\mathbf{O}_N(\widehat{S})\rightarrow\mathbf{Aut}(\mathcal{K}_N)(\widehat{\mathcal{S}})$
of Theorem~\ref{auto-thm} is\break $\pi_1(R)$-equivariant we are
reduced to determining $H^1\big(\widehat{\Z},\mathbf{O}_N(\widehat{S})\big).$ The
classes in $H^1\big(\widehat{\Z},\mathbf{O}_N(\widehat{S})\big)$
corresponding to $[\alpha]$ and $[\beta]$ will be still denoted in the same way.

Consider the split exact sequence of $\widehat{\Z}$-groups
\begin{equation}
1\longrightarrow\mathbf{SO}_N(\widehat{S})\longrightarrow\mathbf{O}_N(\widehat{S})\overset{\det}{\longrightarrow}\Z/2\Z\rightarrow1,\label{eqCoh1}
\end{equation}
where $\widehat{\Z}$ acts on $\Z/2\Z$ trivially and
``$\det$'' is the determinant map. It yields the exact
sequence of  pointed sets
\begin{equation}
H^1\big(\widehat{\Z},\mathbf{SO}_N(\widehat{S})\big)\longrightarrow H^1\big(\widehat{\Z},\mathbf{O}_N(\widehat{S})\big)\overset{\psi}{\longrightarrow}
H^1(\widehat{\Z},\Z/2\Z).\label{eqCoh2}
\end{equation}
Since $\widehat{\Z}$ acts on $\Z/2\Z$ trivially, we have
$H^1(\widehat{\Z},\Z/2\Z)\simeq \Z/2\Z$. Since the short
exact sequence $(\ref{eqCoh1})$ is split, $\psi$ admits a
section hence is surjective by general considerations. This
is also explicitly clear in our situation since  $\psi$
visibly maps $[\alpha]$ and $[\beta]$ to the two distinct
classes in $H^1(\widehat{\Z},\Z/2\Z)$. It remains to show
that $\psi$ is injective.

The fiber of $\psi$ over the trivial class of
$H^1(\widehat{\Z},\Z/2\Z)$ is measured by
$H^1\big(\widehat{\Z},\mathbf{SO}_N(\widehat{S})\big)$,
while the fiber over the non-trivial class is measured by
$H^1\big(\widehat{\Z},{_\beta}\mathbf{SO}_N(\widehat{S})\big)$
where ${_\beta}\mathbf{SO}_N$ is the group scheme over $R$
obtained from $\mathbf{SO}_N$ by twisting  by
$\beta.$\footnote{Strictly speaking the twist is by the
pullback $\psi_{*}([\beta]).$  The abuse of terminology and
notation is standard.} Since every finite connected
\'{e}tale cover of\break $\mathrm{Spec}(\C[t^{\pm1}])$ is
of the form $\mathrm{Spec}(\C[t^{\pm\frac{1}{m}}])$ with
$m$ a positive integer, all such  covers have  trivial
Picard group. By \cite{P1}, Theorem~3.1 (i)
$H^1_{\mathrm{\acute{e}t}}(R,\mathfrak{G})$ vanishes for
every reductive group scheme $\mathfrak{G}$ over  $R$, in
particular for $\mathbf{SO}_N$ and ${_\beta}\mathbf{SO}_N.$
On the other hand by the isotriviality theorem of
\cite{GP1,GP2} we have
$H^1_{\mathrm{\acute{e}t}}(R,\mathfrak{G}) \simeq
H^1\big(\pi_1(R),\mathfrak{G}(\widehat{S})\big)=
H^1\big(\widehat{\Z},\mathfrak{G}(\widehat{S})\big).$ This
finishes the proof of injectivity.
\end{proof}

Now we consider the relation between isomorphism of twisted
loop algebras based on $\mathcal{K}_N$ as $\mathcal{R}$-Lie
conformal superalgebras and isomorphism of these objects as
complex Lie conformal superalgebras. Recall that for any
$\mathcal{R}$-Lie conformal superalgebra $\mathcal{A}$, the
centroid\footnote{The centroid of a $\mathcal{R}$-Lie
conformal superalgebra is originally defined in \cite{KLP}
Section~2.4, which is analogous to the centroid of root
graded Lie algebras studied in \cite{BN}.}
$\mathrm{Ctd}_{\mathcal{R}}(\mathcal{A})$ is defined to be
\[
\mathrm{Ctd}_{\mathcal{R}}(\mathcal{A})=\{\chi\in\mathrm{End}_{R\text{-smod}}(\mathcal{A})|\chi(a_{(n)}b)=a_{(n)}\chi(b)\text{ for all }a,b\in\mathcal{A},n\in\Bbb N\},
\]
where $\mathrm{End}_{R\text{-smod}}(\mathcal{A})$ is the
set of homogeneous $R$-supermodule endomorphisms
$\mathcal{A}\rightarrow\mathcal{A}$ of degree $\even$.

\begin{lemma}\label{ctd}
Let $\mathcal{A}=\mathcal{L}(\mathcal{K}_N,\sigma)$ where
$\sigma$ is an automorphism of $\mathcal{K}_N$ of order
$m$, where $N=1,2,3$. Then the canonical map
$R\rightarrow\mathrm{Ctd}_\C(\mathcal{A})$ is a
$\C$-algebra isomorphism.
\end{lemma}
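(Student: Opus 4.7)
The plan is to separate injectivity from surjectivity and then carry out surjectivity by faithfully flat descent along $\mathcal{S}_m/\mathcal{R}$. Injectivity is immediate: the canonical map sends $r\in R$ to multiplication by $r$, which is centroidal by axiom (CS3), and $r\cdot(1\otimes 1)=0$ forces $r=0$ since $1\otimes 1\in\mathcal{A}$. So I would focus on surjectivity.

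For surjectivity, the first step is base change. Since $\sigma$ has order $m$, the loop construction provides the standard isomorphism
\[
\mathcal{A}\otimes_{\mathcal{R}}\mathcal{S}_m\;\simeq\;\mathcal{K}_N\otimes_{\C}\mathcal{S}_m
\]
of $\mathcal{S}_m$-Lie conformal superalgebras, and this is equivariant for the action of $\Z/m\Z$ which acts on the right-hand side via $\sigma\otimes\gamma$, where $\gamma:t^{1/m}\mapsto\zeta_m t^{1/m}$. Because $\mathcal{A}$ satisfies (Fin), a standard argument shows that formation of the centroid commutes with this faithfully flat base change, so $\mathrm{Ctd}_{\C}(\mathcal{A})\otimes_R\mathcal{S}_m\simeq\mathrm{Ctd}_{\mathcal{S}_m}(\mathcal{K}_N\otimes_{\C}\mathcal{S}_m)$ and it suffices to compute the right-hand side and then descend.

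The key computation is that $\mathrm{Ctd}_{\C}(\mathcal{K}_N)=\C$ for $N=1,2,3$, from which $\mathrm{Ctd}_{\mathcal{S}_m}(\mathcal{K}_N\otimes_{\C}\mathcal{S}_m)=\mathcal{S}_m$ follows by tensoring. For a centroidal $\chi$ on $\mathcal{K}_N$ I would use three successive observations drawn from the explicit products. First, $1_{(0)}$ acts on $\mathcal{K}_N$ as $-\partial$, so $\chi$ is $\C[\partial]$-linear and is determined by its restriction to $\Lambda(N)$. Second, $1_{(1)}$ acts on a $\Z$-homogeneous $f$ as the scalar $\tfrac{|f|}{2}-2$, and these scalars are pairwise distinct for $|f|\in\{0,1,\ldots,N\}$, so $\chi$ preserves the $\Z$-grading of $\Lambda(N)$. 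Third, setting $c=\chi(1)$, the identity $(\xi_i)_{(0)}\xi_i=-\tfrac{1}{2}$ together with the centroid condition forces $\chi(\xi_i)=c\,\xi_i$, after which the products $(\xi_i)_{(1)}\xi_j=-\xi_i\xi_j$ and their analogues on $\xi_i\xi_j$ and $\xi_1\xi_2\xi_3$ propagate the scalar $c$ through every $\Z$-degree; one concludes that $\chi=c\cdot\mathrm{id}$.

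Finally, Galois descent along the equivariant isomorphism above identifies $\mathrm{Ctd}_{\C}(\mathcal{A})$ with the $\Z/m\Z$-invariants of $\mathrm{Ctd}_{\mathcal{S}_m}(\mathcal{K}_N\otimes_{\C}\mathcal{S}_m)=\mathcal{S}_m$, which is precisely $R=\mathcal{S}_m^{\Z/m\Z}$. I expect the main obstacle to be the third paragraph above: one has to systematically check that the low-degree $n$-products in $\mathcal{K}_N$ couple the different $\Z$-degrees of $\Lambda(N)$ strongly enough for the single scalar $c=\chi(1)$ to determine $\chi$ everywhere. The descent and base-change steps are formal once this absolute-centroid computation is in hand.
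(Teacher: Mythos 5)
There is a genuine gap, and it sits precisely at the point where your argument is most confident: the base-change and descent steps. The lemma concerns $\mathrm{Ctd}_\C(\mathcal{A})$, the centroid of $\mathcal{A}$ viewed as a \emph{complex} Lie conformal superalgebra, whose elements are merely $\C$-linear endomorphisms commuting with all $n$-products; nothing in the definition makes them $R$-linear. Your isomorphism $\mathrm{Ctd}_{\C}(\mathcal{A})\otimes_R\mathcal{S}_m\simeq\mathrm{Ctd}_{\mathcal{S}_m}(\mathcal{K}_N\otimes_{\C}\mathcal{S}_m)$ and your final identification of $\mathrm{Ctd}_\C(\mathcal{A})$ with the $\Z/m\Z$-invariants of $\mathrm{Ctd}_{\mathcal{S}_m}(\mathcal{K}_N\otimes_\C\mathcal{S}_m)$ both presuppose exactly that $R$-linearity: a merely $\C$-linear $\chi$ does not induce a well-defined map $\chi\otimes\mathrm{id}$ on $\mathcal{A}\otimes_R\mathcal{S}_m$, and Galois invariants of $S_m$-linear endomorphisms recover $\mathrm{End}_R(\mathcal{A})$, not $\mathrm{End}_\C(\mathcal{A})$. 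So what your scheme can establish is $\mathrm{Ctd}_{\mathcal{R}}(\mathcal{A})\simeq R$, which is strictly weaker. The whole point of the lemma --- it is what feeds into Corollary~2.36 of \cite{KLP} to descend the classification from $\mathcal{R}$-isomorphism to $\C$-isomorphism --- is the automatic statement that every $\C$-linear centroidal endomorphism is multiplication by an element of $R$; that automatic $R$-linearity is the actual content and cannot be extracted from a formal faithfully-flat-descent argument that only manipulates $R$- and $S_m$-linear maps. The paper therefore proceeds by direct computation inside $\mathcal{K}_N\otimes_\C\widehat{\mathcal{S}}$: with $L=-1\otimes1$, a centroidal $\chi$ preserves the $L_{(1)}$-eigenspaces, the relation $\chi(L_{(2)}L)=L_{(2)}\chi(L)=0$ kills the possible $(\partial\xi_i\xi_j)$-components of $\chi(L)$, whence $\chi(L)=L\otimes r$ with $r\in R$ because $\chi(L)\in\mathcal{A}$; the products $(f\otimes s)_{(1)}L$ and $L_{(0)}=\widehat{\partial}$ then propagate the single element $r$ over all of $\mathcal{A}$.

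A secondary problem occurs in your absolute computation of $\mathrm{Ctd}_\C(\mathcal{K}_N)$. The eigenvalues of $1_{(1)}$ separate the $\Z$-degrees only on $1\otimes\Lambda(N)$: on $\partial^{\ell}f$ the eigenvalue is $\tfrac{1}{2}|f|-2-\ell$, and there are coincidences across degrees --- for instance $1$ and $\partial\xi_i\xi_j$ lie in the same eigenspace, as do $\xi_i$ and $\partial\xi_1\xi_2\xi_3$ when $N=3$. Since $\chi$ is only known to commute with $\partial$, not to preserve the $\C[\partial]$-degree (commuting with $\widehat{\partial}$ alone does not force degree preservation, as the dual-numbers automorphism in the paper's remark after Theorem~\ref{thm-auto-loop} illustrates in a related setting), your assertion that ``$\chi$ preserves the $\Z$-grading of $\Lambda(N)$'' does not follow as stated; you need the analogue of the paper's $L_{(2)}$ trick to eliminate the cross terms before the scalar $c=\chi(1)$ can be propagated. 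Once that is repaired the computation $\mathrm{Ctd}_\C(\mathcal{K}_N)=\C$ is sound, but by the first paragraph it still does not yield the lemma through the proposed descent.
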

\begin{proof}
Recall that for $r\in R$, there is an endomorphism
$r_{\mathcal{A}}:\mathcal{A}\rightarrow\mathcal{A},
a\mapsto ra$, which is an element of $\mathrm{Ctd}_\Bbb
C(\mathcal{A})$. Then the canonical map
$R\rightarrow\mathrm{Ctd}_{\Bbb C}(\mathcal{A})$ is defined
by $r\mapsto r_{\mathcal{A}}$. Since
$\sigma\circ\partial=\partial\circ\sigma$, we obtain,
\begin{align}
(\mathcal{K}_N)_i&=\{a\in\mathcal{K}_N|\sigma(a)=\zeta_m^ia\}\label{eigen}\\
&=\C[\partial]\otimes_\C\mathrm{span}_\C \{f\in\Lambda(N)|\sigma(f)=\zeta_m^if\text{ and }f\text{ homogeneous}\}.\notag
\end{align}
Recall that
\[
\mathcal{A}=\sum\limits_{i=0}^{m-1}(\mathcal{K}_N)_i\otimes t^{\frac{i}{m}}{\Bbb C}[t^{\pm1}]\subseteq\mathcal{K}_N\otimes_\C \widehat{\mathcal{S}},
\]
and
\[
\mathcal{K}_N\otimes_\C\widehat{\mathcal{S}}=\bigoplus\limits_{l\geqslant0\atop k=0,\ldots,N}(\mathcal{K}_N\otimes_\C\widehat{\mathcal{S}})_{l,k},
\]
where $(\kern-1pt \mathcal{K}_N\otimes_{\Bbb C}\widehat{\mathcal{S}})_{l,k}=\{\!(\kern-1pt \partial^lf\kern-1pt )\otimes
s|f\in\Lambda(\kern-1.5pt N\kern-1pt )\text{ homogeneous of degree }k, s\in\widehat{S}\}$. Hence,
\begin{align}
\mathcal{A}&=\bigoplus\limits_{l\geqslant0\atop k=0,\ldots,N}\left(\mathcal{A}\cap (\mathcal{K}_N\otimes_{\Bbb C}\widehat{\mathcal{S}})_{l,k}\right) \label{span}\\
&=\mathrm{span}_\C\left\{(\partial^lf)\otimes s\middle|f\in\Lambda(N)\text{ homogeneous, }\atop s\in\widehat{S}\text{ and }\ell\geqslant0\text{ such that }\partial^lf\otimes s\in\mathcal{A}\right\}.\notag
\end{align}

Let $\chi\in\mathrm{Ctd}_\C(\mathcal{A})$. Let $L=-1\otimes1\in\mathcal{A}$. For $f\in\Lambda(N)$ homogeneous of degree $|f|$,
\[
L_{(1)}((\partial^\ell f)\otimes s)=\left(2+\ell-\frac{1}{2}|f|\right)(\partial^\ell f)\otimes s, \qquad\forall s\in\widehat{S}, \ell\geqslant0.
\]
We therefore see that $\mathcal{K}_N\otimes_\C \widehat{\mathcal{S}}$ can be decomposed into the direct sum of eigenspaces of $L_{(1)}$ and
\begin{align*}
(\mathcal{K}_N\otimes_\C\widehat{\mathcal{S}})_2&=\{a\in \mathcal{K}_N\otimes_\C\widehat{\mathcal{S}}|L_{(1)}a=2a\}\\
&=\mathrm{span}_\C \{1\otimes s,(\partial\xi_i\xi_j)\otimes s'|i<j, s,s'\in\widehat{S}\}.
\end{align*}
Applying $\chi$ on $L_{(1)}L=2L$ we obtain $L_{(1)}\chi(L)=2\chi(L)$, thus we may assume that
\[
\chi(L)=L\otimes r+\sum\limits_{i<j}(\partial\xi_i\xi_j)\otimes s_{ij},
\]
where $r,s_{ij}\in\widehat{S}$. Then
\[
0=\chi(L_{(2)}L)=L_{(2)}(L\otimes s)+\sum\limits_{i<j}L_{(2)}\big((\partial\xi_i\xi_j)\otimes s_{ij}\big)=2\sum\limits_{i<j}\xi_i\xi_j\otimes s_{ij}.
\]
It follows $s_{ij}=0, i<j$. So $\chi(L)=L\otimes r$ and $r\in R$ because $\chi(L)=L\otimes\break r\in\mathcal{A}$.

For $f\in\Lambda(N)$ homogeneous of degree $|f|$ and $s\in\widehat{S}$ such that $f\otimes s\in\mathcal{A}$,
\begin{align*}
\left(2-\frac{1}{2}|f|\right)\chi(f\otimes s)&=\chi((f\otimes s)_{(1)}L)=(f\otimes s)_{(1)}\chi(L)\\
&=(f\otimes s)_{(1)}L\otimes r=\left(2-\frac{1}{2}|f|\right)f\otimes sr.
\end{align*}
Since $|f|\leqslant3$, $2-\frac{1}{2}|f|\neq0$, thus $\chi(f\otimes s)=f\otimes sr$.

Let $f\in\Lambda(N)$ homogeneous of degree $|f|$,
$s\in\widehat{S}$ and $\ell\geqslant0$ such that
$(\partial^{l+1})f\otimes s\in\mathcal{A}$. By
$(\ref{eigen})$, we have $(\partial^lf)\otimes
s\in\mathcal{A}$. Then
\[
\chi\big((\partial^{l+1}f)\otimes s\big)=\chi\big(L_{(0)}((\partial^lf)\otimes s)\big)=L_{(0)}\chi\big((\partial^lf)\otimes s\big).
\]
By induction, $\chi\big((\partial^\ell f)\otimes s\big)=(\partial^\ell f)\otimes sr$ for all $\ell\geq0, s\in\widehat{S}$ such that\break $(\partial^\ell f)\otimes s\in\mathcal{A}$.

Therefore, $(\ref{span})$ implies $\chi(a)=ar$ for all $a\in\mathcal{A}$. Thus the canonical map $R\rightarrow\mathrm{Ctd}_\C (\mathcal{A})$ is surjective. Injectivity is clear.
\end{proof}

\begin{theorem}
There are exactly two twisted loop Lie conformal
superalgebra (up to isomorphism of complex Lie conformal
superalgebras) based on each $\mathcal{K}_N, N=1,2,3$.
These are $\mathcal{L}(\mathcal{K}_N,\mathrm{id})$ and
$\mathcal{L}(\mathcal{K}_N,\omega_N)$.
\end{theorem}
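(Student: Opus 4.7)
The plan is to reduce the $\C$-classification to the $\mathcal{R}$-classification from Theorem~\ref{RSform} via a centroid trick enabled by Lemma~\ref{ctd}. Every twisted loop algebra $\mathcal{L}(\mathcal{K}_N,\sigma)$ is, by construction, a $\widehat{\mathcal{S}}/\mathcal{R}$-form of $\mathcal{K}_N\otimes_\C\mathcal{R}$, hence by Theorem~\ref{RSform} is $\mathcal{R}$-isomorphic (and a fortiori $\C$-isomorphic) to one of $\mathcal{L}(\mathcal{K}_N,\mathrm{id})$ or $\mathcal{L}(\mathcal{K}_N,\omega_N)$. So the remaining content is exactly that these two distinguished loops are \emph{not} $\C$-isomorphic, which I will prove by establishing the sharper statement that any $\C$-isomorphism between two twisted loop algebras based on $\mathcal{K}_N$ is automatically an $\mathcal{R}$-isomorphism.

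So suppose $\phi\colon\mathcal{L}_1\to\mathcal{L}_2$ is a $\C$-isomorphism of two such loop algebras, and write $r_i$ for the centroid element on $\mathcal{L}_i$ given by multiplication by $r\in R$. Conjugation by $\phi$ sends $\mathrm{Ctd}_\C(\mathcal{L}_1)$ onto $\mathrm{Ctd}_\C(\mathcal{L}_2)$, and via the canonical identifications of both centroids with $R$ provided by Lemma~\ref{ctd} this produces a $\C$-algebra automorphism $\psi\colon R\to R$ uniquely determined by $\phi\circ r_1=\psi(r)_2\circ\phi$. The goal is $\psi=\mathrm{id}$, because then $\phi$ is $R$-linear; combined with $\phi\partial=\partial\phi$ (an automatic property of morphisms of Lie conformal superalgebras) this upgrades $\phi$ to an $\mathcal{R}$-isomorphism, which by Theorem~\ref{RSform} forces $\mathcal{L}_1$ and $\mathcal{L}_2$ to lie in the same class.

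The key step, which I expect to be the heart of the argument, is to upgrade $\psi$ from a $\C$-algebra automorphism of $R$ to a \emph{differential} automorphism of $\mathcal{R}$. The input is axiom (CS2), phrased as the commutator identity $\partial\circ r_i-r_i\circ\partial=\delta(r)_i$ in $\mathrm{End}_\C(\mathcal{L}_i)$. Conjugating this identity by $\phi$, using $\phi\partial=\partial\phi$ and $\phi r_1=\psi(r)_2\phi$, and then subtracting the $\mathcal{L}_2$-version of (CS2) for $\psi(r)$ yields $\psi(\delta(r))_2=\delta(\psi(r))_2$; by injectivity of the embedding $R\hookrightarrow\mathrm{Ctd}_\C(\mathcal{L}_2)$ from Lemma~\ref{ctd}, this is equivalent to $\psi\circ\delta=\delta\circ\psi$ on $R$.

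The conclusion is then a one-line check: every $\C$-algebra automorphism of $R=\C[t^{\pm1}]$ has the form $t\mapsto ct^{\pm1}$ for some $c\in\C^\times$, and a direct computation shows that among these the only one commuting with $d/dt$ is the identity. Hence $\psi=\mathrm{id}$, $\phi$ is an $\mathcal{R}$-isomorphism, and the $\C$-isomorphism classes of twisted loop algebras based on $\mathcal{K}_N$ are in natural bijection with the $\mathcal{R}$-isomorphism classes of Theorem~\ref{RSform}, namely the two classes represented by $\mathcal{L}(\mathcal{K}_N,\mathrm{id})$ and $\mathcal{L}(\mathcal{K}_N,\omega_N)$.
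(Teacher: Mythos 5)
Your proof is correct and takes essentially the same route as the paper: every twisted loop algebra is an $\widehat{\mathcal{S}}/\mathcal{R}$-form, Theorem~\ref{RSform} gives the two $\mathcal{R}$-isomorphism classes, and the centroid identification of Lemma~\ref{ctd} rules out any collapse when passing to $\C$. The only difference is that the paper disposes of this last step by citing Corollary~2.36 of \cite{KLP}, whereas you prove the needed special case inline --- conjugating centroids to obtain a $\C$-algebra automorphism $\psi$ of $R$, using (CS2) together with $\phi\circ\widehat{\partial}=\widehat{\partial}\circ\phi$ to show $\psi\circ\delta=\delta\circ\psi$, and checking that $(\C[t^{\pm1}],\frac{d}{dt})$ admits no nontrivial differential automorphism, so that every $\C$-isomorphism is automatically an $\mathcal{R}$-isomorphism --- which is precisely the content of the cited result.
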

\begin{proof}
Each twisted loop Lie conformal superalgebra based on
$\mathcal{K}_N$ is an
$\widehat{\mathcal{S}}/\mathcal{R}$-form of
$\mathcal{K}_N\otimes_\C \mathcal{R}$. It follows from
Theorem~\ref{RSform} that there exists exactly two of them
up to isomorphism of $\mathcal{R}$-Lie conformal
superalgebras, namely
$\mathcal{L}(\mathcal{K}_N,\mathrm{id})$ and
$\mathcal{L}(\mathcal{K}_N,\omega_N)$. By Lemma~\ref{ctd}
and Corollary~2.36 of  \cite{KLP} we conclude that
$\mathcal{L}(\mathcal{K}_N,\mathrm{id})$ and
$\mathcal{L}(\mathcal{K}_N,\omega_N)$ remain non-isomorphic
when viewed as complex Lie conformal
superalgebras.\footnote{The situation is quite different
for loop algebras based on finite-dimensional simple
algebras. Here non-isomorphic objects over $R$ may become
isomorphic when viewed as complex algebras. Remarkably
enough, this does not happen in the case of Lie algebras.
The reason is that in the outer group of automorphisms
(symmetries of the Dynkin diagram), every element is
conjugate to its inverse. See \cite{P1} for details.}
\end{proof}

\section{Passage from Lie conformal superalgebras to superconformal Lie algebras}

In the previous section, we have shown that there are only
two twisted loop Lie conformal superalgebras based on
$\mathcal{K}_N$ up to isomorphism of complex Lie conformal
superalgebras, namely
$\mathcal{L}(\mathcal{K}_N,\mathrm{id})$ and
$\mathcal{L}(\mathcal{K}_N,\omega_N)$. By factoring the
image of $\partial+\delta_t$, we obtain two Lie
superalgebras $\mathrm{Alg}(\mathcal{K}_N,\mathrm{id})$ and
$\mathrm{Alg}(\mathcal{K}_N,\omega_N)$. The crucial point
is that these two Lie superalgebras are non-isomorphic.
They can be distinguished by the eigenvalues of the
Virasoro operator.\footnote{For $N = 4$ this argument is
mentioned, without proof, in \cite{KLP}.} In this section,
we will give a rigorous proof of this non-isomorphism
statement when $N=3$,  which is the most involved of all
three cases.

Let $\mathrm{Alg}(\mathcal{K}_3,\mathrm{id})=\mathcal{L}(\mathcal{K}_3,\mathrm{id})/(\partial+\delta_t)\mathcal{L}(\mathcal{K}_3,\mathrm{id})$,
where the Lie bracket on
$\mathrm{Alg}(\mathcal{K}_3,\mathrm{id})$ is given by the
zeroth product of $\mathcal{L}(\mathcal{K}_3,\mathrm{id})$.
For $a\in\mathcal{L}(\mathcal{K}_3,\mathrm{id})$, let
$\bar{a}$ be its image in
$\mathrm{Alg}(\mathcal{K}_3,\mathrm{id})$. Set
\begin{align*}
&L_m=-\overline{1\otimes t^{m+1}},&& G^i_\alpha=2\overline{\xi_i\otimes t^{\alpha+\frac{1}{2}}},\\
&T^i_m=2\mathbf{i}\epsilon_{ijl}\overline{\xi_j\xi_l\otimes t^m},&&\Psi_\alpha=-2\mathbf{i}\overline{\xi_1\xi_2\xi_3\otimes t^{\alpha-\frac{1}{2}}}.
\end{align*}
for $i=1,2,3, m\in\Z, \alpha\in\frac{1}{2}+\Z$. Then $\{L_m,T^i_m,G^i_\alpha,\Psi_\alpha|i=1,2,3, m\in\Z,\break \alpha\in\frac{1}{2}+\Z\}$ is a basis of $\mathrm{Alg}(K_3,\mathrm{id}).$ For $m,n\in\Z,\alpha,\beta\in\frac{1}{2}+\Z, i,j=1,2,3$, the following relations hold:
\[
\begin{array}{@{}l@{\;\,}lll@{}}
[L_m,L_n]=(m-n)L_{m+n},&[L_m,T^i_n]=-nT^i_{m+n},& [T^i_m,T^j_n]=\mathbf{i}\epsilon_{ijl}T^l_{m+n},\\
\/[L_m,\Psi_{\alpha}]=-\left(\dfrac{1}{2}m+\alpha\right)\Psi_{m+\alpha},&[T^i_m,\Psi_\alpha]=0,& [\Psi_\alpha,\Psi_\beta]=0,\\
\/[L_m,G^i_{\alpha}]=\left(\dfrac{1}{2}m-\alpha\right)G^i_{m+\alpha},&\multicolumn{2}{@{}l}{[T^i_m,G^j_\alpha]=\mathbf{i}\epsilon_{ijl}G^l_{m+\alpha}+\delta_{ij}m\Psi_{m+\alpha},}\\
\/[G^i_{\alpha},\Psi_{\beta}]=T^i_{\alpha+\beta},&\multicolumn{2}{@{}l}{[G^i_\alpha,G^j_\beta]=2\delta_{ij}L_{\alpha+\beta}+\mathbf{i}\epsilon_{ijl}(\alpha-\beta)T^l_{\alpha+\beta}.}\\
\end{array}
\]

Next we consider $\mathrm{Alg}(\mathcal{K}_3,\omega_3)
=\mathcal{L}(\mathcal{K}_3,\omega_3)/(\partial+\delta_t)\mathcal{L}(\mathcal{K}_3,\omega_3)$,
where the Lie bracket  is given by the corresponding zeroth
product  as before. Since  $\omega_3$ acts on  the even
part $(\mathcal{K}_3)_\even$ as the identity and on  the
odd part $(\mathcal{K}_3)_\odd$ as $-\mathrm{id}$, we have
by definition
\[
\mathcal{L}(\mathcal{K}_3,\omega_3)=((\mathcal{K}_3)_{\even}\otimes_\C \C[t,t^{-1}])\oplus ((\mathcal{K}_3)_{\odd}\otimes_{\C}t^{\frac{1}{2}}\C[t,t^{-1}]).
\]
As before for $a\in\mathcal{L}(\mathcal{K}_3,\omega_3)$ we
let $\bar{a}$  denote its image in
$\mathrm{Alg}(\mathcal{K}_3,\omega_3)$. Let
\begin{align*}
&L_m=-\overline{1\otimes t^{m+1}},&& G^i_m=2\overline{\xi_i\otimes t^{m+\frac{1}{2}}},\\
&T^i_m=2\mathbf{i}\epsilon_{ijl}\overline{\xi_j\xi_l\otimes t^m},&&\Psi_m=-2\mathbf{i}\overline{\xi_1\xi_2\xi_3\otimes t^{m-\frac{1}{2}}},
\end{align*}
for $i=1,2,3, m\in\Z$. Then $\{L_m,T^i_m,G^i_m,\Psi_m|i=1,2,3, m\in\Z\}$ is a basis of
$\mathrm{Alg}(\mathcal{K}_3,\omega_3)$ and these elements
satisfy the following relations  for $m,n\in\Z$ and
$i,j=1,2,3$:
\[
\begin{array}{@{}l@{\;\,}lll@{}}
[L_m,L_n]=(m-n)L_{m+n},&[L_m,T^i_n]=-nT^i_{m+n},&[T^i_m,T^j_n]=\mathbf{i}\epsilon_{ijl}T^l_{m+n},\\
\/[L_m,\Psi_n]=-\left(\dfrac{1}{2}m+n\right)\Psi_{m+n},&[T^i_m,\Psi_n]=0,&[\Psi_m,\Psi_n]=0,\\
\/[L_m,G^i_n]=\left(\dfrac{1}{2}m-n\right)G^i_{m+n},&\multicolumn{2}{@{}l}{[T^i_m,G^j_n]=\mathbf{i}\epsilon_{ijl}G^l_{m+n}+\delta_{ij}m\Psi_{m+n},}\\
\/[G^i_m,\Psi_n]=T^i_{m+n},&\multicolumn{2}{@{}l}{[G^i_m,G^j_n]=2\delta_{ij}L_{m+n}+\mathbf{i}\epsilon_{ijl}(m-n)T^l_{m+n}.}\\
\end{array}
\]

To prove that these two Lie superalgebras are not isomorphic, we first establish an auxiliary useful Lemma.
Consider the subspace of $\mathrm{Alg}(\mathcal{K}_3,{\rm id})$ and $\mathrm{Alg}(\mathcal{K}_3,\omega_3)$ with bases composed of  the respective
$\{L_m,T^i_m|i=1,2,3,\break m\in\Z\}.$ The above relations show that these are in fact sub Lie superalgebras which are isomorphic. We will denote them by $\mathfrak{g}.$
\begin{lemma}[{[Rigidity of the Virasoro element]}]\label{auto-even}
If $\phi$ is an automorphism of $\mathfrak{g}$, then $\phi(L_0)=\pm L_0$.
\end{lemma}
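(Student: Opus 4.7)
The plan is to exploit the semidirect-product structure $\mathfrak{g} = \mathfrak{w} \ltimes \mathfrak{l}$, where $\mathfrak{w} = \operatorname{span}_{\C}\{L_m : m \in \Z\}$ is the centerless Virasoro (Witt) subalgebra and $\mathfrak{l} = \operatorname{span}_{\C}\{T^i_m : i=1,2,3,\, m \in \Z\} \cong \mathfrak{so}_3(\C) \otimes \C[t^{\pm 1}]$ is a Lie ideal. The argument then proceeds in three stages.

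First, I would show that $\mathfrak{l}$ is the unique nontrivial proper ideal of $\mathfrak{g}$, hence characteristic. For any ideal $I \trianglelefteq \mathfrak{g}$, the intersection $I \cap \mathfrak{l}$ corresponds to an $\mathfrak{so}_3$- and Witt-stable subspace of $\mathfrak{so}_3(\C) \otimes \C[t^{\pm 1}]$; by simplicity of $\mathfrak{so}_3$ and the fact that the only Witt-stable ideals of $\C[t^{\pm 1}]$ are $\{0\}$ and the whole ring, this forces $I \cap \mathfrak{l} \in \{0, \mathfrak{l}\}$. The case $I \cap \mathfrak{l} = 0$ yields $I \subseteq C_{\mathfrak{g}}(\mathfrak{l}) = 0$, while the case $\mathfrak{l} \subseteq I$ uses that $\mathfrak{g}/\mathfrak{l} \cong \mathfrak{w}$ is simple. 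Consequently $\phi(\mathfrak{l}) = \mathfrak{l}$.

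Next, $\phi$ descends to an automorphism $\bar\phi$ of $\mathfrak{g}/\mathfrak{l} \cong \mathfrak{w}$. Invoking the classical description of $\operatorname{Aut}(\mathfrak{w})$ as generated by the dilations $L_n \mapsto c^n L_n$ ($c \in \C^\times$) and the involution $L_n \mapsto -L_{-n}$, every element of $\operatorname{Aut}(\mathfrak{w})$ sends $\bar L_0$ to $\pm \bar L_0$. Therefore $\phi(L_0) = \epsilon L_0 + w$ for some sign $\epsilon \in \{\pm 1\}$ and some $w \in \mathfrak{l}$, and the remaining task is to show $w = 0$.

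For this last step, the operator $\operatorname{ad}(\phi(L_0))$ is $\phi$-conjugate to $\operatorname{ad}(L_0)$, so it is diagonalizable on $\mathfrak{g}$ with integer eigenvalues of uniform multiplicity four. Decomposing $w = \sum_{k \in \Z} w_k$ by the $\operatorname{ad}(L_0)$-grading on $\mathfrak{l}$, I would compare the centralizer $C_{\mathfrak{g}}(\phi(L_0))$ with $C_{\mathfrak{g}}(L_0) = \C L_0 \oplus \mathfrak{so}_3(\C)$ (whose one-dimensional center is $\C L_0$), using that $\phi$ maps center to center; combined with the compatibility constraints coming from the Witt relations $[\phi(L_m), \phi(L_n)] = (m-n)\phi(L_{m+n})$ and the uniform multiplicity of each integer eigenvalue, these invariants should allow one to peel off the homogeneous components $w_k$ one at a time and conclude each vanishes. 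This is the main obstacle: pure ad-spectrum matching is too coarse, since gauge-like perturbations of $L_0$ by elements of $\mathfrak{l}$ can preserve eigenvalues, so the argument must exploit the detailed Lie-algebraic structure of the centralizer together with the Witt-relation compatibility to force $w = 0$.
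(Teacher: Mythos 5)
Your first two stages are sound: $\mathfrak{l}=\mathrm{span}_{\C}\{T^i_m\}$ is indeed the unique nontrivial proper ideal of $\mathfrak{g}$ (the centralizer argument works, since $C_{\mathfrak{g}}(\mathfrak{l})=0$), and the classical description of automorphisms of the Witt algebra correctly yields $\phi(L_0)=\epsilon L_0+w$ with $\epsilon=\pm1$ and $w\in\mathfrak{l}$. But stage 3 is not merely the ``main obstacle'' you flag --- it cannot be closed, because the gauge-like perturbations you worry about are realized by honest automorphisms of $\mathfrak{g}$. Set $H_m=T^3_m$ and $E^\pm_m=T^1_m\pm\mathbf{i}T^2_m$, so that $[H_m,E^\pm_k]=\pm E^\pm_{m+k}$, $[E^+_m,E^-_k]=2H_{m+k}$, $[L_m,H_k]=-kH_{m+k}$, $[L_m,E^\pm_k]=-kE^\pm_{m+k}$. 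For fixed $n\in\Z$ define $\phi_n(L_m)=L_m+nH_m$, $\phi_n(H_m)=H_m$, $\phi_n(E^\pm_m)=E^\pm_{m\pm n}$; this is conjugation of $\mathfrak{w}\ltimes\bigl(\mathfrak{so}_3\otimes\C[t^{\pm1}]\bigr)$ by the loop corresponding to $\mathrm{diag}(t^n,t^{-n})$, and one checks all brackets directly, e.g.\ $[\phi_n(L_m),\phi_n(E^\pm_k)]=\bigl(-(k\pm n)\pm n\bigr)E^\pm_{m+k\pm n}=\phi_n([L_m,E^\pm_k])$ and $[\phi_n(L_m),\phi_n(L_k)]=(m-k)\bigl(L_{m+k}+nH_{m+k}\bigr)$; there is no affine central term to obstruct the shift. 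Then $\phi_n(L_0)=L_0+nT^3_0\neq\pm L_0$ for $n\neq0$. Moreover your proposed invariants are all matched: $\mathrm{ad}\,\phi_n(L_0)$ is diagonalizable with integer eigenvalues of uniform multiplicity four, its centralizer is $\phi_n(\mathfrak{g}_0)=\mathrm{span}_\C\{\phi_n(L_0),H_0,E^+_n,E^-_{-n}\}\simeq\mathfrak{gl}_2(\C)$, and the center of that centralizer is exactly $\C\,\phi_n(L_0)$ --- so centralizer-plus-spectrum arguments can never force $w=0$.

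It is worth knowing that the paper's own proof, which takes a different route (claiming the locally finite elements of $\mathfrak{g}$ are exactly $\mathfrak{g}_0$, so that any automorphism preserves $\mathfrak{g}_0\simeq\mathfrak{gl}_2(\C)$ and hence its center $\C L_0$), stumbles at precisely the same point: in its case $x=b_1T^1_{-n}+b_2T^2_{-n}+b_3T^3_{-n}$, the asserted formula for $(\mathrm{ad}\,x)^m(L_{-1})$ fails for $m\geq2$, since $(\mathrm{ad}\,x)^2(L_{-1})$ is a multiple of $\sum_{i,j,l}b_ib_j\epsilon_{ijl}T^l_{-2n-1}=0$ ($b_ib_j$ is symmetric in $(i,j)$). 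For isotropic coefficients, e.g.\ $x=E^+_{-n}$, one has $(\mathrm{ad}\,x)^3=0$, so $x$ is locally finite yet lies outside $\mathfrak{g}_0$, and the automorphism $\phi_n$ above is exactly what this loophole permits. So the lemma as stated is false for arbitrary automorphisms of $\mathfrak{g}$, and neither your approach nor the paper's closes the gap; what survives, and what the application (non-isomorphism of $\mathrm{Alg}(\mathcal{K}_3,\mathrm{id})$ and $\mathrm{Alg}(\mathcal{K}_3,\omega_3)$) actually needs, is the weaker rigidity your stages 1--2 already deliver: $\phi(L_0)=\pm L_0+w$ with $w\in\mathfrak{l}$ constrained (for instance by classifying $\mathrm{Aut}(\mathfrak{g})$ as gauge transformations extended by Witt automorphisms) so that $\mathrm{ad}\,\phi(L_0)$ still has integer spectrum shifts, which preserves the half-integer versus integer spectral distinction on the odd parts.
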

\begin{proof}
$\mathfrak{g}$ can be decomposed into the direct sum of weight spaces with respect to $L_0$, i.e.,
\[
\mathfrak{g}=\bigoplus_{n\in\Z}\mathfrak{g}_n,
\]
where $\mathfrak{g}_n=\{x\in\mathfrak{g}|[L_0,x]=nx\}=\mathrm{span}_\C\{L_{-n}, T_{-n}^i,i=1,2,3\}$, $n\in\Z$.  If\break $x\in\mathfrak{g}_n$ we say that $x$ is homogeneous of weight $n$ and write $\mathrm{wt}(x)=n.$ Note that $[\mathfrak{g}_m,\mathfrak{g}_n]\subseteq\mathfrak{g}_{m+n}$.

Recall that an element $x$ of a Lie superalgebra
$\mathfrak{g}$ over $\C$ is called {\it locally finite} if
for every fixed $y\in\mathfrak{g}$, $\{(\mathrm{ad} x)^ny,
n\in \Bbb N\}$ spans a finite-dimensional subspace, or
equivalently, every element of $\mathfrak{g}$ lies inside a
finite-dimensional $\mathrm{ad} x$-stable subspace. We
first show that the locally finite elements of
$\mathfrak{g}$ are precisely the elements of
$\mathfrak{g}_0$. For $n>0$, let $x$ be an arbitrary
non-zero element in $\mathfrak{g}_n$. We claim that there
exists a homogeneous element  $y\in\mathfrak{g}$ such that
$(\mathrm{ad}x)^m(y)\neq0$ and
$\mathrm{wt}\big((\mathrm{ad}x)^m(y)\big)<\mathrm{wt}\big((\mathrm{ad}x)^{m+1}(y)\big)$
for all $m\in \mathbb{N}$.

We may assume $x=aL_{-n}+b_1T^1_{-n}+b_2T^2_{-n}+b_3T^3_{-n}$. If $a\neq0$, let $y=L_{-2n}$, then
\begin{align*}
(\mathrm{ad} x)^m(y)&=m!a^mn^mL_{-(m+2)n}+c_{1m}T^1_{-(m+2)n}+c_{2m}T^2_{-(m+2)n}\\
&\quad +c_{3m}T^3_{-(m+2)n} \\
&\neq 0
\end{align*}
and $\mathrm{wt}\big((\mathrm{ad}x)^my\big)=(m+2)n,
m\in\Bbb N$. If $a=0$,
$x=b_1T^1_{-n}+b_2T^2_{-n}+b_3T^3_{-n}$ with
$(b_1,b_2,b_3)\neq(0,0,0)$. Take $y=L_{-1}$, then
\[
(\mathrm{ad} x)^m(y)=n(n+1)\cdots(n+m-1)(b_1T^1_{-n-m}+b_2T^2_{-n-m}+b_3T^3_{-n-m})\neq0
\]
and $\mathrm{wt}\big((\mathrm{ad}x)^my\big)=m+n$.

Now, let $x\in\mathfrak{g}$ be locally finite, we can
write, $x=x_{-M'}+\cdots+\break x_0+\cdots+x_{M}$ with
$x_l\in\mathfrak{g}_l, l=-M',\ldots,M$ and
$M',M\geqslant0$. If $M>0$ and $x_M\neq0$, then there is a
homogeneous element $y\in\mathfrak{g}$ such that
$\mathrm{wt}\big((\mathrm{ad}
x_M)^my\big)<\mathrm{wt}\big((\mathrm{ad}
x_M)^{m+1}y\big)$. Then
\[
(\mathrm{ad} x)^m(y)=(\mathrm{ad} x_M)^m(y)+\sum\limits_{\mathrm{wt}{z}<\mathrm{wt}((\mathrm{ad} x_M)^m(y))}z.
\]
Thus $\{(\mathrm{ad} x)^m(y)|m\in\Bbb N\}$ is linear
independent. This contradicts the locally finiteness of
$x$. So $M=0$. Similarly, we can prove $M'=0$. Hence,
$x\in\mathfrak{g}_0$.

Conversely, it is easy to check every elements in
$\mathfrak{g}_0$ is locally finite. Moreover, since
$[L_0,L_0]=0,[L_0,T^i_0]=0,
[T^i_0,T^j_0]=\mathbf{i}\epsilon_{ijl}T^l_0$,
$\mathfrak{g}_0$ is a subalgebra of $\mathfrak{g}$,
isomorphic to $\mathfrak{gl}_2(\C)$.

If $\phi$ is an automorphism of $\mathfrak{g}$, then it
preserves locally finite elements. So the restriction of
$\phi$ to $\mathfrak{g}_0$ induces an automorphism of
$\mathfrak{g}_0$. Observe that the center of
$\mathfrak{g}_0$ is $\C L_0$. So $\phi(L_0)=aL_0$ for some
$0\neq a\in \C$.

Since $a[L_0,\phi(L_{-n})]=[\phi(L_0),\phi(L_{-n})]=n\phi(L_{-n})$,
$\phi(L_{-n})$ is a weight vector of $L_0$ with weight
$\frac{n}{a}$ and all weights of $L_0$ are  integers, it
follows $\frac{n}{a}$ is an integer for all $n$. Hence
$a=\pm1$.
\end{proof}

\begin{remark}
Similar results hold for $N = 1,2.$
\end{remark}

\begin{proposition}
For $N = 1,2,3$ the Lie superalgebras $\mathrm{Alg}(\mathcal{K}_N,\mathrm{id})$ and $\mathrm{Alg}(\mathcal{K}_N,\omega_N)$ are not isomorphic.
\end{proposition}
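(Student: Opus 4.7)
The plan is to combine the rigidity statement (Lemma~\ref{auto-even}) with a parity-of-weights observation: the odd part of $\mathrm{Alg}(\mathcal{K}_N,\mathrm{id})$ sits entirely in half-integer $\mathrm{ad}\,L_0$-weights, while the odd part of $\mathrm{Alg}(\mathcal{K}_N,\omega_N)$ sits entirely in integer weights. For $N=3$ this is visible from the relations $[L_0,G^i_\alpha]=-\alpha G^i_\alpha$ and $[L_0,\Psi_\alpha]=-\alpha\Psi_\alpha$ with $\alpha\in\tfrac12+\Z$ on the untwisted side, versus the same relations with $m\in\Z$ on the twisted side; the $N=1,2$ cases are entirely analogous. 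The strategy is thus to assume an isomorphism $\phi:\mathrm{Alg}(\mathcal{K}_N,\mathrm{id})\to\mathrm{Alg}(\mathcal{K}_N,\omega_N)$ exists and push a half-integer eigenvalue across it.

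First, $\phi$ preserves the $\Z/2\Z$-grading and hence restricts to a Lie algebra isomorphism of the even parts. From the bases displayed above, the even part of each $\mathrm{Alg}(\mathcal{K}_N,\cdot)$ coincides with the subalgebra $\mathfrak{g}$ spanned by the $L_m$ and $T^i_m$, since the $G^i,\Psi$ generators arise from odd elements of $\Lambda(N)$. So $\phi|_\mathfrak{g}\in\mathrm{Aut}(\mathfrak{g})$, and Lemma~\ref{auto-even} together with its $N=1,2$ analogues flagged in the preceding remark forces $\phi(L_0)=\varepsilon L_0$ for some $\varepsilon\in\{\pm 1\}$.

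Next I pick an odd element of $\mathrm{Alg}(\mathcal{K}_N,\mathrm{id})$ that is an $\mathrm{ad}\,L_0$-eigenvector with non-integer eigenvalue: for $N=3$ the element $x=G^1_{1/2}$ satisfies $[L_0,x]=-\tfrac12 x$, and the analogous lowest $G$-element works for $N=1,2$. Applying $\phi$ yields $[\phi(L_0),\phi(x)]=-\tfrac12\phi(x)$, i.e.\ $[L_0,\phi(x)]=\mp\tfrac12\phi(x)$, so $\phi(x)\ne 0$ is an $\mathrm{ad}\,L_0$-eigenvector in the odd part of $\mathrm{Alg}(\mathcal{K}_N,\omega_N)$ with eigenvalue $\mp\tfrac12$. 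But that odd part decomposes, in the displayed basis $\{G^i_m,\Psi_m\}$, into $\mathrm{ad}\,L_0$-eigenspaces of \emph{integer} eigenvalues only, so no such vector exists. Contradiction.

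The only real obstacle is the appeal to Lemma~\ref{auto-even} for $N=1,2$: the paper states but does not prove these analogues. For $N=1$, $\mathfrak{g}$ is the centreless Virasoro (Witt) algebra, whose $L_0$ is classically fixed up to sign by any automorphism; for $N=2$ the locally-finite-elements argument in the proof of Lemma~\ref{auto-even} adapts essentially verbatim, since it uses only the eigenspace decomposition under $L_0$ and the integrality of weights on $\mathfrak{g}$.
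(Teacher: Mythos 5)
Your $N=3$ argument is exactly the paper's proof: restrict the putative isomorphism $\phi$ to the even parts (which for $N=3$ are literally the same $\mathfrak{g}$, since $\omega_3$ acts as the identity on $(\mathcal{K}_3)_\even$), invoke Lemma~\ref{auto-even} to force $\phi(L_0)=\pm L_0$, and compare the $\mathrm{ad}\,L_0$-spectra on the odd parts ($\tfrac12+\Z$ versus $\Z$). The $N=1$ case is also sound: there the even part is the Witt algebra on both sides, and $L_0$ is classically rigid up to sign.

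The genuine gap is at $N=2$, where both of your auxiliary claims fail. First, $\omega_2$ is \emph{not} the identity on the even part (it sends $\xi_1\xi_2\mapsto-\xi_1\xi_2$), so the even parts of the two algebras are not a common $\mathfrak{g}$: the untwisted one is spanned by $L_m,T_m$ with $m\in\Z$, the twisted one by $L_m$ ($m\in\Z$) and $T_\alpha$ ($\alpha\in\tfrac12+\Z$). Second, the locally-finite argument does not adapt ``essentially verbatim'': for $N=2$ the currents commute, since $(\xi_1\xi_2)_{(0)}(\xi_1\xi_2)=(\xi_1\xi_2)_{(1)}(\xi_1\xi_2)=0$, so $\mathfrak{g}_0=\C L_0\oplus\C T_0$ is \emph{abelian} and the step ``the center of $\mathfrak{g}_0$ is $\C L_0$'' collapses. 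Indeed $L_m\mapsto L_m+bT_m$, $T_m\mapsto T_m$ is an automorphism of the untwisted even part sending $L_0\mapsto L_0+bT_0$, so strict rigidity $\phi(L_0)=\pm L_0$ is false there. Third, your spectral claim is wrong for $N=2$: since $\omega_2$ fixes $\xi_2$, the generator $G^2$ retains half-integer modes in $\mathrm{Alg}(\mathcal{K}_2,\omega_2)$, so $\mp\tfrac12$ \emph{is} an $\mathrm{ad}\,L_0$-eigenvalue on the twisted odd part and your final contradiction evaporates. A clean repair for $N=2$ bypasses rigidity altogether: the center of the untwisted even part is $\C T_0\neq 0$, whereas the twisted even part (all current modes in $\tfrac12+\Z$) has trivial center, so already the even parts are non-isomorphic. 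To be fair, the paper itself proves only $N=3$ explicitly and relegates $N=1,2$ to a remark; but since you asserted the $N=2$ case as routine, the failure there needs to be recorded and fixed.
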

\begin{proof}
As mentioned above we only give an explicit prove in the
case $N = 3.$ Suppose  that
$\phi:\mathrm{Alg}(\mathcal{K}_3,\mathrm{id})\rightarrow
\mathrm{Alg}(\mathcal{K}_3,\omega_3)$ is an isomorphism of
Lie superalgebras. Then it induces an  isomorphism on the
even parts which by rigidity satisfies $\phi(L_0)=\pm L_0$.
But the eigenvalues of $L_0$ in
$\mathrm{Alg}(\mathcal{K}_3,\mathrm{id})_\odd$ are
contained in $\frac{1}{2}+\Z$, while the eigenvalues of
$\phi(L_0)=\pm L_0$ in
$\mathrm{Alg}(\mathcal{K}_3,\omega_3)_\odd$ are all
integers; a contradiction.
\end{proof}

\section{A conjecture on representability}

Let $\mathcal{A}$ be one of the complex $N$-Lie conformal
superalgebras, where $N = 1,2,3,4.$ (In particular,
$\mathcal{A} = \mathcal{K}_N$ if $N \neq 4.$ If $N = 4$ the
description of $\mathcal{A}$ is given in detail in both
\cite{K, KLP}).  Let $\mathcal{L}$ be an
$\widehat{\mathcal{S}}/\mathcal{R}$-form of $\mathcal{A}
\otimes_\C \mathcal{R}$, Consider the corresponding
$\mathcal{R}$-group functor of automorphisms
$\mathbf{Aut}(\mathcal{L})$ (see Remark \ref{grpfunrmk}).
The above construction for $N \neq 4$ and that of
\cite{KLP} in the case $N = 4$ show that there is a
suitable degree 0 subspace $\mathcal{V}$ of $\mathcal{L}$
such that $\mathcal{L}=\bigoplus_{n\in\Bbb
N}\widehat{\partial}^n\mathcal{V}$ since the descent data
defining $\mathcal {L}$ preserves the suitable degree $0$
space $\C \otimes V \otimes \widehat{\mathcal{S}}$ of
$\mathcal{A} \otimes \widehat{\mathcal{S}} = \C[\partial]
\otimes_\C V \otimes_\C \widehat{\mathcal{S}}.$ One can
define in a natural way a subgroup functor
$\mathbf{GrAut}(\mathcal{L})$ of
$\mathbf{Aut}(\mathcal{L}).$

\begin{conjecture}
{\it If $ N \neq 4$ the $\mathcal{R}$-group functor
$\mathbf{GrAut}(\mathcal{L})$ is representable by  an a
affine group scheme of finite type whose connected
component of the identity is simple (in the sense of {\rm
\cite{SGA3}}).}
\end{conjecture}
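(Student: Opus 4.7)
The plan is to build on Theorem~\ref{auto-thm}, which for $N=1,2,3$ identifies $\mathbf{GrAut}(\mathcal{K}_N)\simeq\Oth_N$ as $\C$-group functors; since $\Oth_N$ is a closed subgroup scheme of $\mathbf{GL}_N$, this already settles representability by an affine $\C$-group scheme of finite type, with identity component $\mathbf{SO}_N$. The conjecture's setup restricts $\mathcal{A}$ to one of the $N$-Lie conformal superalgebras with $N\in\{1,2,3,4\}$, so "$N\neq 4$" in this paper's context means $N\in\{1,2,3\}$; my proposal addresses exactly these three values uniformly.

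First I would dispose of the trivial form $\mathcal{L}=\mathcal{A}_{\mathcal{R}}$ by base-changing Theorem~\ref{auto-thm} from $\C$ to $\mathcal{R}$: the identification $\mathbf{GrAut}(\mathcal{K}_N)(\mathcal{E})\simeq\Oth_N(E)$ is functorial in the differential extension $\mathcal{E}=(E,\delta_{\mathcal{E}})$, so $\mathbf{GrAut}(\mathcal{A}_{\mathcal{R}})$ is represented by $(\Oth_N)_{\mathcal{R}}$. For a nontrivial form $\mathcal{L}$, Theorem~\ref{RSform} gives $\mathcal{L}\simeq\mathcal{L}(\mathcal{K}_N,\omega_N)$, and after base change $\mathcal{L}\otimes_{\mathcal{R}}\widehat{\mathcal{S}}\simeq\mathcal{A}\otimes_{\C}\widehat{\mathcal{S}}$; the descent datum translates, via the $\pi_1(R)$-equivariance of $\iota$ noted before Theorem~\ref{RSform}, into a continuous cocycle $z\in Z^1(\widehat{\Z},\Oth_N(\widehat{S}))$. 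Faithfully flat descent for closed subgroup schemes of $\mathbf{GL}_N$ then produces the inner twist ${}_{z}(\Oth_N)_{\mathcal{R}}$, again an affine $\mathcal{R}$-group scheme of finite type. One must check that this descended object actually represents the differential functor $\mathbf{GrAut}(\mathcal{L})$: the point is that $\Oth_N$ carries no differential content (it is defined by the polynomial equation $A A^{\mathrm{T}}=\mathrm{id}$), so the differential-ring descent of \cite{KLP} collapses to ordinary fppf descent on the group side, giving the required representability.

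For the identity component, ${}_{z}(\Oth_N)_{\mathcal{R}}^{\circ}={}_{z}(\mathbf{SO}_N)_{\mathcal{R}}$. When $N=3$, $\mathbf{SO}_3$ is absolutely simple in the SGA3 sense (type $A_1$), and absolute simplicity of the geometric fibres is invariant under \'etale twisting, so ${}_{z}(\mathbf{SO}_3)_{\mathcal{R}}$ is a simple reductive $\mathcal{R}$-group scheme. When $N=1$ the identity component is trivial and the conclusion is vacuous. The main obstacle is $N=2$: $\mathbf{SO}_2\simeq\mathbb{G}_m$ is a one-dimensional torus, which is not simple in the strict semisimple sense of \cite{SGA3}. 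One route is to interpret "simple" liberally as "having no proper connected normal $\mathcal{R}$-subgroup scheme", which still holds for ${}_{z}(\mathbf{SO}_2)_{\mathcal{R}}$ (a rank-one torus, possibly non-split); another is to acknowledge that the conjecture as worded should really be stated for $N=3$ with $N=1,2$ treated as boundary cases. The remaining technical verification---that the \cite{KLP} differential-descent machinery applied to $\mathbf{GrAut}$ coincides with classical faithfully flat descent of the affine $R$-scheme $(\Oth_N)_{\mathcal{R}}$---is formal once Theorem~\ref{auto-thm} is in hand, since the subgroup functor is representable already at the base $\C$.
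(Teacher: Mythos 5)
You were asked to prove a statement that the paper itself labels a \emph{Conjecture}: the authors supply no proof, only the remark that after a ``differential'' version of faithfully flat descent one may assume $\mathcal{L}$ split, in which case $\mathbf{GrAut}(\mathcal{L}) = \mathbf{O}_{N,R}$. So there is no paper argument to compare yours against, and the standard for your proposal is whether it actually closes the conjecture. It does not, for two reasons. First, the step you declare ``formal'' --- that the differential-ring descent of \cite{KLP} collapses to ordinary fppf descent on the group side, so that the twisted scheme $\leftidx{_z}{(\Oth_N)}{_{\mathcal{R}}}$ represents $\mathbf{GrAut}(\mathcal{L})$ --- is precisely the unproven content that the authors refrain from carrying out. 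To make it rigorous you must show that $\mathbf{GrAut}(\mathcal{L})$, a functor on differential extensions $\mathcal{E}$ of $\mathcal{R}$ whose representing scheme (per Remark \ref{grpfunrmk}) sees only the underlying ring $E$, is a sheaf for $\mathcal{E}\to\mathcal{E}\otimes_{\mathcal{R}}\widehat{\mathcal{S}}$, that the trivialization matches the descended degree-$0$ space $\mathcal{V}\otimes E$ with $\C\otimes\Lambda(N)\otimes(E\otimes_R\widehat{S})$, and that the resulting fixed-point description agrees, functorially in $\mathcal{E}$, with the $E$-points of the twisted scheme. None of this is checked; you only assert it. (It is plausibly true, since Theorem \ref{auto-thm} holds for arbitrary differential rings --- unlike Theorem \ref{thm-auto-loop}, which needs integral domains --- but a proof is exactly what is missing.)

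Second, and decisively, the simplicity claim fails as written for $N=1,2$, and your proposal concedes rather than resolves this. For $N=1$ the identity component is trivial; for $N=2$ it is $\mathbb{G}_{m,R}$ in the split case and the norm-one torus of $\C[t^{\pm 1/2}]/R$ in the twisted case (conjugation by $\mathrm{diag}(-1,1)$ acts on $\mathbf{SO}_2$ by inversion), and a torus is not semisimple, hence not simple in the sense of \cite{SGA3}. Proposing to ``interpret simple liberally'' or to restate the conjecture for $N=3$ only is an admission that you have proved a different statement, not the one given. Your argument is genuinely sound only for $N=3$, and even there it can be simplified: $\omega_3$ corresponds to the central matrix $-I_3\in\Oth_3(\C)$, so the conjugation cocycle acts trivially on $\mathbf{SO}_3$ and $\leftidx{_z}{(\mathbf{SO}_3)}{_{\mathcal{R}}}\simeq\mathbf{SO}_{3,R}$ outright; no appeal to invariance of simplicity under \'etale twisting is needed. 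In sum: a useful sketch of the expected picture, but with the descent verification elided and the $N=1,2$ cases contradicting the stated conclusion, this cannot count as a proof of the conjecture.
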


After a ``differential'' version of faithfully flat descent
we may assume that $\mathcal{L}$ is split, that is
$\mathcal{L} = \mathcal{A} \otimes_\C \mathcal{R}.$ For $N
\neq 4$ we have $\mathbf{GrAut}(\mathcal{L}) =
\mathbf{O}_{N,R} $, where this last denotes  the orthogonal
group $\mathbf{O}_N$ over ${\rm Spec}(R).$

In view of \cite{KLP}, for $N = 4$ the natural conjecture
to make is that in the split case
$\mathbf{GrAut}(\mathcal{L}) = \big(\mathbf{SL}_2 \times
{\rm SL}_2(\C)_R\big)/{\pmb \mu}_{2,R}$ where ${\rm
SL}_2(\C)_R$ is the constant $R$-group scheme with
underlying (abstract) group ${\rm SL}_2(\C)$, and ${\pmb
\mu}_{2,R}$ embeds diagonally. We refrain, however, from
making such a conjecture.

\section*{Acknowledgments}

Z.C. gratefully acknowledges the support of University of
Alberta and of the Chinese Scholarship Council. He also
wishes to  thank H. Chen for helpful conversations. A.P.
wishes to thank the continued support of NSERC and CONICET.
Both authors wish to acknowledge the helpful comments by
the referees.

\noindent\textsc{Department of Mathematical and Statistical Sciences\\
University of Alberta\\
Edmonton, Alberta T6G 2G1\\
Canada} \\
\textit{E-mail address}: {\tt
zchang1@ualberta.ca}\\ \\
\textsc{Department of Mathematical and Statistical Sciences\\
University of Alberta\\
Edmonton, Alberta T6G 2G1\\
Canada\\
and\\
Centro de Altos Estudios en Ciencias Exactas\\
Avenida de Mayo 866, (1084)\\
Buenos Aires\\
Argentina}\\
\textit{E-mail address}: {\tt a.pianzola@ualberta.ca}\\
%\received{Received May 25, 2011}}
\end{document}